\title{Gromov-Hausdorff Approximation of Metric Spaces with Linear Structure}
\author{Fr\'ed\'eric Chazal \thanks{INRIA Saclay - France - frederic.chazal@inria.fr}, Jian Sun \thanks{Tsinghua University - China}}
\begin{document}


\maketitle

\begin{abstract}In many real-world applications data come as discrete metric spaces sampled around 1-dimensional filamentary structures that can be seen as metric graphs. In this paper we address the metric reconstruction problem of such filamentary structures from data sampled around them. We prove that they can be approximated, with respect to the Gromov-Hausdorff distance by well-chosen Reeb graphs (and some of their variants) and we provide an efficient and easy to implement algorithm to compute such approximations in almost linear time. We illustrate the performances of our algorithm on a few synthetic and real data sets. 
\end{abstract}


\section{Introduction}
\paragraph{Motivation.} 
With the advance of sensor technology, computing power and Internet, massive
amounts of geometric data are being generated and collected in various areas of
science, engineering and business. As they are becoming widely available, there
is a real need to analyze and visualize these large scale geometric data to
extract useful information out of them. In many cases these data are not
embedded in Euclidean spaces and come as (finite) sets of points with pairwise
distances information, i.e. (discrete) metric spaces. A large amount of
research has been done on dimensionality reduction, manifold learning and
geometric inference for data embedded in, possibly high dimensional, Euclidean
spaces and assumed to be concentrated around low dimensional manifolds~\cite{belkin2003led,Lafon04diffusion,J.B.Tenenbaum:2000:0833c}. 
However, the assumption of data lying on a manifold may fail in
many applications. In addition, the strategy of representing data by points in
Euclidean space may introduce large metric distortions as the data may lie in
highly curved spaces, instead of in flat Euclidean space raising many
difficulties in the analysis of metric data. In the past decade, with the
development of topological methods in data analysis, new theories such as
topological persistence (see, for example, \cite{elz-tps-02, zc-cph-05, c-td-09, chazal2012structure}) and new tools such as the Mapper
algorithm \cite{smc-tmah-07} have given rise to new algorithms to extract and
visualize geometric and topological information from metric data without the
need of an embedding into an Euclidean space.  In this paper we focus on a
simple but important setting where the underlying geometric structure
approximating the data can be seen as a branching filamentary structure i.e.,
more precisely, as a {\em metric graph} which is a topological graph endowed
with a length assigned to each edge (see Section \ref{sec:metric-GH}). Such
structures appear naturally in various real-world data such as collections of
GPS traces collected by vehicles on a road network, earthquakes distributions
that concentrate around geological faults, distributions of galaxies in the
universe, networks of blood vessels in anatomy or hydrographic networks in
geography just to name a few of them. 
It is thus appealing to try to capture such filamentary structures and to
approximate the data by metric graphs that will summarize the metric and allow
convenient visualization. 

\paragraph{Contribution} 
In this paper we address the metric reconstruction problem for filamentary
structures. The input of our method and algorithm is a metric space $(X, d_X)$
that is assumed to be close with respect to the so-called Gromov-Hausdorff
distance $d_{GH}$ to a much simpler, but unknown, metric graph $(G', d_{G'})$. Our
algorithm outputs a metric graph $(G, d_G)$ that is proven to be close to $(X,
d_{X})$. Our approach relies on the notion of Reeb graph (and some variants of
it introduced in Section \ref{sec:reeb-graph}) and one of our main theoretical
result can be stated as follows.  
\vskip5pt\noindent
{\bf Theorem \ref{thm:bound-dGH-final}}. 
{\em Let $(X, d_X)$ be a compact connected geodesic space, let $r \in X$ be a fixed base point such that the metric Reeb graph $(G,d_G)$ of the function $d = d_X(r,.): X \rightarrow \mathbb{R}$ is a finite graph.
If for a given $\e >0$ there exists a finite metric graph $(G',d_{G'})$ such that $d_{GH}(X,G') < \e$ then we have 
\[
d_{GH}(X, G) < 2 (\beta_1(G) + 1)  (17 + 8 N_{E,G'}(8 \e)) \e
\] 
where $N_{E,G'}(8\e)$ is the number of edges of $G'$ of length at most $8\e$ and $\beta_1(G)$ is the first Betti number of $G$, i.e. the number of edges to remove from $G$ to get a spanning tree. 
In particular if $X$ is at distance less than $\e$ from a metric graph with shortest edge larger than $8\e$ then $d_{GH}(X,G) < 34  (\beta_1(G) + 1) \e$.}
\vskip5pt\noindent
Turning this result into a practical algorithm requires to address two issues: 
\begin{itemize}
\item[-] First, raw data usually do not come as geodesic spaces. They are given as
  discrete sets of point (and thus not connected metric spaces) sampled from
  the underlying space $(X, d_X)$. Moreover in many cases only distances
  between nearby points are known. A geodesic space (see Section
  \ref{sec:metric-GH} for a definition of geodesic space) can then be obtained
  from these raw data as a neighborhood graph where nearby points are connected
  by edges whose length is equal to their pairwise distance. The shortest path
  distance in this graph is then used as the metric. In our experiments we use this new metric as the input of our algorithm. 
The question of the approximation of the metric on $X$ by the metric induced on
  the neighborhood graphs is out of the scope of this paper.
\item[-] Second, approximating the Reeb graph $(G,d_G)$ from a neighborhood graph is
  usually not obvious. If we compute the Reeb graph of the distance function to
  a given point defined on the neighborhood graph we obtain the neighborhood
  graph itself and do not achieve our goal of representing the input data by a
  simple graph. It is then appealing to build a two dimensional complex having
  the neighborhood graph as $1$-dimensional skeleton and use the algorithm of
  \cite{hww-rtac-10, Parsa:2012} to compute the Reeb graph of the distance to the root
  point. Unfortunately adding triangles to the neighborhood graph may widely
  change the metric between the data points on the resulting complex and
  significantly increase the complexity of the algorithm. We
  overcome this issue by introducing a variant of the Reeb graph, the
  $\alpha$-Reeb graph, inspired from \cite{smc-tmah-07} and related to the recently introduced notion of graph induced complex \cite{dfw-gicpd-13}, that is easier to
  compute than the Reeb graph but also comes with approximation guarantees
  (see Theorem \ref{thm:bound-dGH-final-alpha}). 
As a consequence our algorithm
  relies on the Mapper algorithm of \cite{smc-tmah-07} and runs in almost
  linear time (see Section~\ref{sec:algo-pl}).
\end{itemize}

\paragraph{Related work.} 
Approximation of data by $1$-dimensional geometric structures has been
considered by different communities.  In statistics, several approaches have
been proposed to address the problem of detection and extraction of filamentary
structures  in point cloud data.  For example Arial-Castro et al
\cite{ariascastro06} use  multiscale anisotropic strips to detect linear
structure while \cite{genovese09,gppvw-gnfe-12} and more recently
\cite{gppvw-nre-12} base their approach upon density gradient descents or
medial axis techniques. These methods apply to data corrupted by outliers
embedded  in Euclidean spaces and focus on the inference of individual
filaments without focus on the global geometric structure of the filaments
network. 

In computational geometry, the curve reconstruction problem from
points sampled on a curve in an euclidean space has been extensively studied
and several efficient algorithms have been proposed \cite{ABE98,DMR00,DW01}.
Unfortunately, these methods restricts to the case of simple embedded curves
(without singularities or self-intersections) and hardly extend to the case of
topological graphs.  In a more intrinsic setting where data come as finite
abstract metric spaces,
\cite{MRIDUL-metric-graph} propose an algorithm that outputs, under some
specific sampling conditions, a topologically correct (up to a homeomorphism)
reconstruction of the approximated graph. However  this algorithm requires some
tedious parameters tuning and relies on quite restrictive sampling assumptions.
When these conditions are not satisfied, the algorithm may fail and not even
outputs a graph.  Compared to the algorithm of \cite{MRIDUL-metric-graph}, our
algorithm not only comes with metric guarantees but also whatever the input data
is, it always outputs a metric graph and does not require the user to choose any
parameters. Our approach is also related to the so-called Mapper algorithm
\cite{smc-tmah-07} that provides a way to visualize data sets endowed with a
real valued function as a graph. Indeed the implementation of our algorithm
relies on the Mapper algorithm where the considered function is the distance to
the chosen root point. However, unlike the general mapper algorithm, our
methods provides an upper bound on the Gromov-Hausdorff distance between the
reconstructed graph and the underlying space from which the data points have
been sampled.  

In theoretical computer science, there is much of work on approximating
metric spaces using trees \cite{Badoiu:2007:AAE, AbrahamBKMRT07, ChepoiDEHV08} or 
distribution of trees \cite{Dhamdhere:2006,Fakcharoenphol:2003} where
the trees are often constructed as spanning trees possibly with Steiner points.
Our approach is different as our reconstructed graph or tree is a quotient space
of the original metric space where the metric only gets contracted 
(see Lemma~\ref{lemma:pi-Lipschitz}). 
Finally we remark that the recovery of filament structure is also studied in 
various applied settings, including road networks \cite{CGHS10,tupin98}, 
galaxies distributions \cite{2010MNRAS.tmp..692C}. 

\vskip10pt 
The paper is organized as follows.  The basic notions and definitions used throughout the paper are recalled in Section \ref{sec:metric-GH}. The Reeb and $\alpha$-Reeb graphs endowed with a natural metric are introduced in Section \ref{sec:reeb-graph} and the approximation results are stated and proven in Sections \ref{sec:reeb-bound} and \ref{sec:M-bound}. Our algorithm is described in Section \ref{sec:algo-pl} and  experimental results are presented and discussed in Section \ref{sec:exp}.


\section{Preliminaries: metric spaces, metric graphs and Gromov-Hausdorff distance}  \label{sec:metric-GH}
%
Recall that a metric space is a pair $(X,d_X)$ where $X$ is a set and $d_X : X \times X \to \R$ is a non negative map such that for any $x,y,z \in X$, $d_X(x,y) = 0$ if and only if $x=y$, $d_X(x,y) = d_X(y,x)$ and $d_X(x,z) \leq d_X(x,y) + d_X(y,z)$.
Two compact spaces $(X,d_X)$ and $(Y,d_Y)$ are isometric if there exits a bijection $\phi : X \to Y$ that preserves the distances, namely: for any $x,x' \in X, d_Y(\phi(x),\phi(x')) = d_X(x,x')$. The set of isometry classes of compact metric spaces can be endowed with a metric, the so-called Gromov-Hausdorff distance that can be defined using the following notion of correspondence (\cite{bbi-cmg-01} Def. 7.3.17).

\begin{definition}
Let $(X,d_X)$ and $(Y,d_Y)$ be two compact metric spaces. Given $\e >0$, an {\em $\e$-correspondence} between $(X,d_X)$ and $(Y,d_Y)$ is a subset $C \subset X \times Y$ such that:\\
i) for any $x \in X$ there exists $y \in Y$ such that $(x,y) \in C$;\\
ii) for any $y \in Y$ there exists $x \in X$ such that $(x,y) \in C$;\\
iii) for any $(x,y), (x',y') \in C$, $| d_X(x,x') - d_Y(y,y')| \leq \e$.
\end{definition}

\begin{definition}
The {\em Gromov-Hausdorff distance} between two compact metric spaces $(X,d_X)$ and $(Y,d_Y)$ is defined by
$$d_{GH}(X,Y) = \frac{1}{2} \inf \{\e \geq 0 : \mbox{\rm there exists an $\e$-correspondence between $X$ and $Y$} \}$$
\end{definition}

A metric space $(X,d_X)$ is a {\em path metric space} if the
distance between any pair of points is equal to the infimum of the lengths of
the continuous curves joining them \footnote{see \cite{g-msrnr-07} Chap.1 for
the definition of the length of a continuous curve in a general metric space}.
Equivalently $(X,d_X)$ is a path metric space if and only if for any $x,y \in
X$ and any $\varepsilon >0$ there exists $z \in X$ such that $\max(d_X(x,z),
d_X(y,z)) \leq \frac{1}{2} d_X(x,y) + \varepsilon$ \cite{g-msrnr-07}.  In the
sequel of the paper we consider compact path metric spaces. It follows from the
Hopf-Rinow theorem (see \cite{g-msrnr-07} p.9) that such spaces are {\em
geodesic}, i.e. for any pair of point $x, x' \in X$ there exists a minimizing
geodesic joining them.\footnote{recall that a minimizing geodesic in $X$ is any
curve $\gamma : I \to X$, where $I$ is a real interval, such that
$d_X(\gamma(t),\gamma(t')) = |t - t'|$ for any $t,t' \in I$.}
A continuous path $\delta : I \to X$ where $I$ is a real interval or the unit circle is said
to be {\em simple} if it is not self intersecting, i.e. if $\delta$ is an
injective map.

Recall that a {\em (finite) topological graph} $G = (V,E)$ is the geometric
realisation of a (finite) $1$-dimensional simplicial complex with vertex set
$V$ and edge set $E$. If moreover each $1$-simplex $e \in E$  is a metric edge,
i.e. $e = [a,b] \subset \mathbb{R}$, then the graph $G$ inherits from a metric
$d_G$ which is the unique one whose restriction to any $e = [a,b] \in E$
coincides with the standard metric on the real segment $[a,b]$.  Then $(G,
d_G)$ is a {\em metric graph} (see \cite{bbi-cmg-01}, Section 3.2.2 for a more
formal definition). Intuitively, a metric graph can be seen as a topological
graph with a length assigned to each of its edges. 

The {\em first Betti 
number $\beta_1(G)$} of a finite topological graph $G$ is the rank of the first
homology group of $G$ (with coefficient in a field, e.g. $\mathbb{Z}/2$), or
equivalently,  the number of edges to remove from $G$ to get a spanning tree. 


\section{Approximation of path metric spaces with Reeb-like graphs}  \label{sec:approx-reeb}

Let $(X, d_X)$ be a compact geodesic space and let $r \in X$ be a fixed base point. Let $d: X \rightarrow \mathbb{R}$
be the distance function to $r$, i.e., $d(x) = d_X(r, x)$. 

\subsection{The Reeb and $\alpha$-Reeb graphs of $d$}  \label{sec:reeb-graph}
\paragraph{The Reeb graph.}
The relation $x \sim y$ if and only if $d(x) = d(y)$ and $x, y$ are in the same path connected component of $d^{-1}(d(x))$ is an equivalence relation. The quotient space $G = X/\sim$ is called the {\em Reeb graph} of $d$ and we denote by $\pi : X \to G$ the quotient map. Notice that $\pi$ is continuous and as $X$ is path connected, $G$ is path connected. The function $d$ induces a function $d_* : G \to \R_+$ that satisfies $d = d_* \circ \pi$. 
The relation defined by: for any $g,g' \in G$, $g \leq_G g'$ if and only if $d_*(g) \leq d_*(g')$ and there exist a continuous path $\gamma$ in $G$ connecting $g$ to $g'$ such that $d \circ \gamma$ is non decreasing, makes $G$ a partially ordered set. 

\paragraph{The $\alpha$-Reeb graphs.}
Computing or approximating the Reeb graph of $(X,d)$ from a finite set of point sampled on $X$ is usually a difficult task. To overcome this issue we also consider a variant of the Reeb graph that shares very similar properties than the Reeb graph. Let $\alpha > 0$ and let $\mathcal{I} = \{ I_i \}_i \in I$ be a covering of the range of $d$ by open intervals of length at most $\alpha$. 
The transitive closure of the relation $x \sim_\alpha y$ if and only if $d(x) = d(y)$ and $x, y$ are in the same path connected component of $d^{-1}(I_i)$ for some interval $I_i \in \mathcal{I}$ is an equivalence relation that is also denoted by $\sim_\alpha$. The quotient space $G_\alpha = X/\sim_\alpha$ is called the {\em $\alpha$-Reeb graph}\footnote{strictly speaking we should call it the $\alpha$-Reeb graph associated to the covering $\mathcal{I}$ but we assume in the sequel that some covering $\mathcal{I}$ has been chosen and we omit it in notations} of $d$ and we denote by $\pi : X \to G_\alpha$ the quotient map. 
Notice that $\pi$ is continuous and as $X$ is path connected, $G_\alpha$ is path connected. The function $d$ induces a function $d_* : G_\alpha \to \R_+$ that satisfies $d = d_* \circ \pi$. The relation defined by: for any $g,g' \in G_\alpha$, $g \leq_{G_\alpha} g'$ if and only if $d_*(g) \leq d_*(g')$ and there exist a continuous path $\gamma$ in $G_\alpha$ connecting $g$ to $g'$ such that $d \circ \gamma$ is non decreasing, makes $G_\alpha$ a partially ordered set. 

The $\alpha$-Reeb graph is closely related to the graph constructed by the Mapper algorithm introduced in \cite{smc-tmah-07} making its computation much more easier than the Reeb graph (see Section \ref{sec:algo-pl}). 
\vskip10pt\noindent
Notice that without making assumptions on $X$ and $d$, in general $G$ and $G_\alpha$ are not finite graphs. However when the number of path connected components of the level sets of $d$ is finite and changes only a finite number of times then the Reeb graph turns out to be a finite directed acyclic graph. Similarly, when the covering of $X$ by the connected components of $d^{-1}(I_i), i \in \mathcal{I}$ is finite, the $\alpha$-Reeb graph also turns out to be a finite directed acyclic graph.
This happens in most applications and for example when $(X,d_X)$ is a finite simplicial complex or a compact semialgebraic (or more generally a compact subanalytic space) with $d$ being semi-algebraic (or subanalytic). 
\vskip10pt\noindent
All the results and proofs presented in Section \ref{sec:approx-reeb} are exactly the same for the Reeb and the $\alpha$-Reeb graphs. In the following paragraph and in Section \ref{sec:reeb-bound}, $G$ denotes indifferently the Reeb graph or an $\alpha$-Reeb graph for some $\alpha >0$. We also always assume that $X$ and $d$ (and $\alpha$ and $\mathcal{I}$) are such that $G$ is a finite graph.  


\paragraph{A metric on Reeb and $\alpha$-Reeb graphs.} Let define the set of vertices $V$ of $G$ as the union of the set of points of degree not equal to $2$ with the set of local maxima of $d_*$ over $G$, and the base point $\pi(r)$. The set of edges $E$ of $G$ is then the set of the connected components of the complement of $V$. Notice that $\pi(r)$ is the only local (and global) minimum of $d_*$: since $X$ is path connected, for any $x \in X$ there exists a geodesic $\gamma$ joining $r$ to $x$ along which $d$ is increasing; $d_*$ is thus also increasing along the continuous curve $\pi(\gamma)$, so $\pi(x)$ cannot be a local minimum of $d_*$. As a consequence $d_*$ is monotonous along the edges of $G$. We can thus assign an orientation to each edge: if $e = [p,q] \in G$ is such that $d_*(p) < d_*(q)$ then the positive orientation of $e$ is the one pointing from $p$ to $q$. 
Finally, we assign a metric to $G$. Each edge $e\in E$ is homeomorphic to an interval to which we assign a length equal to the absolute difference of the function $d_*$ at two endpoints. The distance between two points $p,p'$ of $e$ is then $|d_*(p) - d_*(p')|$. 
This makes $G$ a metric graph $(G, d_G)$ isometric to the quotient space of
the union of the intervals isometric to the edges by identifying the endpoints if they correspond to the 
same vertex in $G$. See Figure~\ref{fig:reebgraph} for an example. Note that $d_*$ is continuous in $(G, d_G)$ and for any $p \in G$, $d_*(p) = d_G(\pi (r), p)$. Indeed this is a consequence of the following lemma.

\begin{lemma} \label{lemma:geodesic-G}
If $\delta$ is a path joining two points $p,p' \in G$ such that $d_* \circ \delta$ is strictly increasing then $\delta$ is a shortest path between $p$ and $p'$ and $d_G(p,p') = d_*(p') - d_*(p)$. 
\end{lemma}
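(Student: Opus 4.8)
The plan is to derive the equality $d_G(p,p') = d_*(p') - d_*(p)$ from the two matching inequalities, and then to observe that $\delta$ itself realizes the lower of the two bounds, which immediately gives that it is a shortest path. Throughout I parametrize $\delta$ as a continuous map on a compact interval, say $\delta : [0,1] \to G$, and I note at the outset that since $d_*\circ\delta$ is strictly increasing we have $d_*(p)\le d_*(p')$, so the right-hand side is the absolute difference $|d_*(p)-d_*(p')|$.

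First I would prove the lower bound $d_G(p,p') \ge d_*(p') - d_*(p)$ by showing that $d_*$ is $1$-Lipschitz on $(G,d_G)$. By construction, $d_G$ is the length metric attached to the chosen edge lengths, and on each edge $e\in E$ the length of a subsegment with endpoints $x,x'$ is exactly $|d_*(x)-d_*(x')|$; in other words $d_*$ restricted to any single edge is an isometry onto a real interval. Hence for any rectifiable path in $G$ its length is at least the total variation of $d_*$ along it, which in turn is at least the absolute difference of the values of $d_*$ at its endpoints. Taking the infimum over all paths from $p$ to $p'$ gives $d_G(p,p')\ge |d_*(p)-d_*(p')| = d_*(p')-d_*(p)$.

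Next I would compute $\operatorname{length}(\delta) = d_*(p')-d_*(p)$, which yields the reverse inequality and shows $\delta$ is a shortest path. Since $d_*\circ\delta$ is strictly increasing it is injective, and $d_*$ takes only finitely many values on the finite vertex set $V$; therefore $\delta^{-1}(V)$ is a finite set $0=t_0<t_1<\cdots<t_m=1$. On each open interval $(t_{i-1},t_i)$ the image $\delta((t_{i-1},t_i))$ is connected and avoids $V$, and since $G\setminus V$ is the disjoint union of the open edges, $\delta$ stays inside a single open edge there; using that $d_*$ is an isometry onto an interval on that edge, $\delta|_{[t_{i-1},t_i]}$ is rectifiable of length $d_*(\delta(t_i))-d_*(\delta(t_{i-1}))$. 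Summing over $i$ the sum telescopes to $d_*(p')-d_*(p)$, so $d_G(p,p')\le\operatorname{length}(\delta)=d_*(p')-d_*(p)$, and combined with the lower bound we get equality with $\delta$ attaining it.

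I expect the only real (and minor) technical obstacle to be the bookkeeping in the last step: making rigorous that $\delta$ meets $V$ at finitely many parameters, that it lies in one edge between consecutive such parameters, and that the length of each such piece is precisely the corresponding increment of $d_*$. This rests on the finiteness of $G$ together with the fact that the arc-length parametrization of an edge coincides with its $d_*$-parametrization; everything else is the standard lower semicontinuity/additivity of length in a length space.
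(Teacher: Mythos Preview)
Your proof is correct and follows essentially the same approach as the paper: decompose a path into edge-pieces, use that on each edge the $d_G$-length of a subsegment equals the absolute $d_*$-increment, and telescope. The paper proves the upper bound first (decomposing $\delta$) and then the lower bound by decomposing an arbitrary simple path; you package the lower bound as ``$d_*$ is $1$-Lipschitz on $(G,d_G)$'', which is the same computation in slightly cleaner clothing. One cosmetic slip: you write $\delta^{-1}(V)=\{0=t_0<\cdots<t_m=1\}$, but $p$ and $p'$ need not be vertices; just adjoin $0$ and $1$ to the partition (as the paper does with its partial end-segments $[p,p_1]$ and $[p_n,p']$) and the argument goes through unchanged.
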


\begin{proof}
As $d_* \circ \delta$ is strictly increasing, when $\delta$ enters an edge $e$ by one of its end points, either it exits at the other end point or it stops at $p'$ if $p' \in e$. Moreover $\delta$ cannot go through a given edge more than one time. As a consequence $\delta$ can be decomposed in a finite sequence of pieces $e_0 = [p, p_1], e_1 = [p_1,p_2], \cdots, e_{n-1} = [p_{n-1}, p_{n}], e_n = [p_n, p']$ where $e_0$ and $e_n$ are the segments joining $p$ and $p'$ to one of the endpoint of the edges that contain them and $e_1, \cdots, e_{n-1}$ are edges. So, the length of $\delta$ is equal to $(d_*(p_1)-d_*(p)) + (d_*(p_2) - d_*(p_1)) + \cdots + (d_*(p') - d_*(p_n)) = d_*(p') - d_*(p)$ and $d_G(p,p') \leq d_*(p') - d_*(p)$. 

Similarly any simple path joining $p$ to $p'$ can be decomposed in a finite sequence of pieces $e'_0 = [p, p'_1], e'_1 = [p'_1,p'_2], \cdots, e'_{k-1} = [p'_{k-1}, p'_{k}], e'_k = [p'_k, p']$ where $e'_0$ and $e'_k$ are the segments joining $p$ and $p'$ to one of the endpoint of the edges that contain them, and $e'_1, \cdots, e'_{k-1}$ are edges. Now, as we do not know that $d_*$ is increasing along this path, its length is thus equal to $|d_*(p'_1)-d_*(p)| + |d_*(p'_2) - d_*(p'_1)| + \cdots + |d_*(p') - d_*(p'_n)| \geq d_*(p') - d_*(p)$. So, $d_G(p,p') \geq d_*(p') - d_*(p)$.
\end{proof}

\begin{figure}[!h]
\centering
\includegraphics[height=0.3\textwidth]{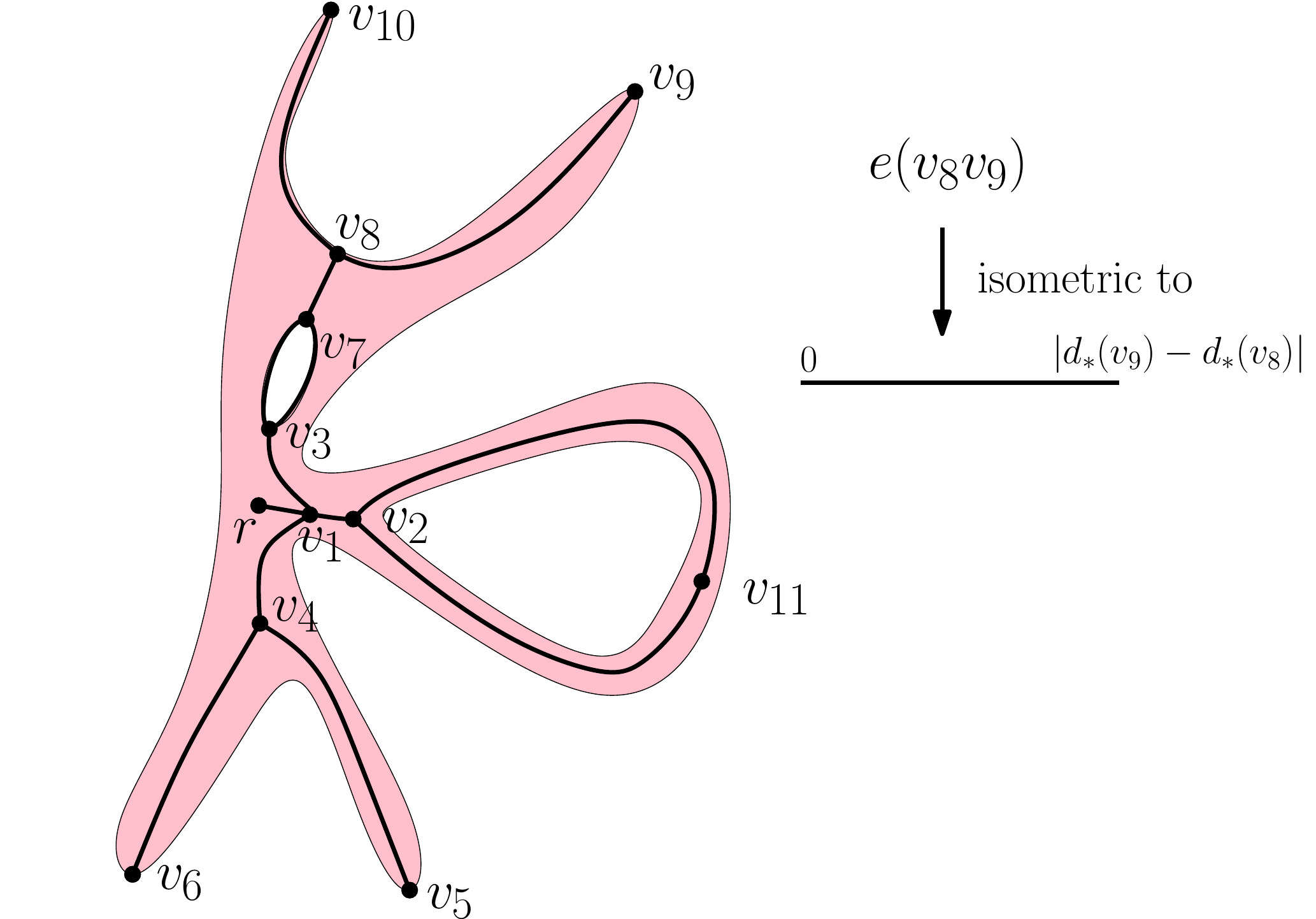} 
\caption{
Construction of a Reeb graph.
\label{fig:reebgraph}}
\end{figure}

\subsection{Bounding the Gromov-Hausdorff distance between $X$ and $G$}  \label{sec:reeb-bound}
The goal of this section is to provide an upper bound of the Gromov-Hausdorff distance between $X$ and $G$ that only depends on the first Betti number $\beta_1(G)$ of $G$ and the maximal diameter $M$ of the level sets of $\pi$. An upper bound of $M$ is given in the next section. 

\begin{theorem} \label{thm:bound-dGH-betti1-M}
$d_{GH}(X, G) < (\beta_1(G) + 1) M$ where $d_{GH}(X, G)$ is the Gromov-Hausdorff distance between $X$ and $G$, $\beta_1(G)$ is the first Betti number of $G$ and $M = \sup_{p\in G}\{\text{diameter}(\pi^{-1}(p))\}$ is the supremum of the diameters of the level sets of $\pi$.
\theolab{thm:bound}
\end{theorem}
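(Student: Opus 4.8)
The plan is to construct an explicit $\e$-correspondence between $X$ and $G$ with $\e < 2(\beta_1(G)+1)M$, using the quotient map $\pi$ as a first approximation and then correcting it along the "extra" edges that create cycles in $G$.

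First I would record the basic properties of $\pi$. Since $d = d_* \circ \pi$ and $d_*(p) = d_G(\pi(r),p)$ (from the metric construction and Lemma~\ref{lemma:geodesic-G}), the map $\pi$ is $1$-Lipschitz: for $x,y \in X$, any geodesic $\gamma$ from $x$ to $y$ maps to a path $\pi(\gamma)$ of length at most $d_X(x,y)$ (arc length cannot increase under $\pi$, since $d_*$ varies no faster along $\pi(\gamma)$ than $d$ does along $\gamma$), hence $d_G(\pi(x),\pi(y)) \le d_X(x,y)$. The "defect" in the other direction is exactly what $M$ controls: if $\pi(x) = \pi(y)$ then $x,y$ lie in a common level set component, so $d_X(x,y) \le M$; more generally I expect an inequality of the form $d_X(x,y) \le d_G(\pi(x),\pi(y)) + cM$ where the constant $c$ depends on how many cycles a shortest path in $G$ must traverse.

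Next, the core of the argument: lift a shortest path in $G$ back to $X$. Given $x,y \in X$, take a shortest path $\Gamma$ from $\pi(x)$ to $\pi(y)$ in $G$. I want to produce a path in $X$ of length not much bigger than $\mathrm{length}(\Gamma) = d_G(\pi(x),\pi(y))$ that joins $x$ to $y$. Over an edge of $G$ along which $d_*$ is monotone, one can lift the sub-path to a monotone path in $X$ of the same length (going up or down a geodesic-like curve realizing the $d$-values), since $d_*$ agrees with arc length on edges and $X$ is geodesic. The trouble is at the vertices of $G$: when $\Gamma$ passes through a branch vertex $v$, the two incident lifted pieces may land at two different points of $\pi^{-1}(v)$, and reconnecting them costs at most $\mathrm{diam}(\pi^{-1}(v)) \le M$. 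If $\Gamma$ were a simple path in a tree, it would meet each vertex essentially once and one could be economical, but a general shortest path in $G$ can reuse vertices; the key combinatorial observation is that a shortest (hence simple) path in a graph with first Betti number $\beta_1(G)$ decomposes into at most $\beta_1(G)+1$ arcs lying in a fixed spanning tree, so the "vertex-jump corrections" can be bundled so that the total correction is at most $2(\beta_1(G)+1)M$ — roughly one jump of size $M$ at the entry and one at the exit of each of the $\beta_1(G)+1$ tree-arcs. This gives $d_X(x,y) \le d_G(\pi(x),\pi(y)) + 2(\beta_1(G)+1)M$, and combined with $1$-Lipschitzness, the set $C = \{(x,\pi(x)) : x \in X\}$ is a $2(\beta_1(G)+1)M$-correspondence (conditions (i),(ii) are immediate since $\pi$ is surjective and defined on all of $X$), whence $d_{GH}(X,G) \le (\beta_1(G)+1)M$, with strict inequality coming from the strict/open nature of the estimates (e.g. the covering intervals are open, or $M$ is a supremum not attained with equality in the relevant places).

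The main obstacle I anticipate is making the lifting argument at the vertices rigorous and, in particular, getting the constant right — specifically, arguing that crossing $\Gamma$ incurs at most $2(\beta_1(G)+1)$ jumps of size $M$ rather than one jump per vertex visited (which would give a bound depending on $|V|$, not $\beta_1$). This is where the spanning-tree decomposition of a simple path must be used carefully: a shortest path $\Gamma$ in $G$ visits at most $\beta_1(G)$ "non-tree" edges, and between consecutive non-tree edges it runs inside the spanning tree where, because a tree has unique arcs, the lifted pieces can be chosen consistently (lift the whole tree-arc at once as a single monotone-in-pieces path), so only the transitions between these $\beta_1(G)+1$ tree-segments — at most two per segment — require an $M$-sized correction. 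A secondary subtlety is the non-monotone case: a shortest path in $G$ need not have $d_*\circ\Gamma$ monotone, so the lift within a single tree-arc itself alternates between "up" geodesics and "down" geodesics, meeting level-set components at the turning points; one checks that at each turning point the two adjacent lifted endpoints already coincide in $X$ (they share the same level-set component) or differ by at most $M$, and that these turning points are exactly the tree-arc endpoints already counted. Assembling these pieces gives the stated bound.
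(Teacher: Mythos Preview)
Your overall architecture matches the paper's: take the correspondence $C=\{(x,\pi(x)):x\in X\}$, show $\pi$ is $1$-Lipschitz (Lemma~\ref{lemma:pi-Lipschitz}), and for the reverse inequality lift a shortest path $\delta$ in $G$ back to $X$ in monotone pieces, paying at most $M$ at each junction. The gap is in how you bound the number of junctions. Decomposing $\delta$ into at most $\beta_1(G)+1$ tree-arcs for an arbitrary spanning tree does \emph{not} control the number of monotone pieces: inside a single tree-arc the restriction of $d_*$ can still have many local extrema, and at each one the two adjacent lifts land at possibly different points of the same fiber and need an $M$-correction. Your assertion that ``these turning points are exactly the tree-arc endpoints already counted'' is false for a generic spanning tree, and even for the shortest-path tree from $\pi(r)$ the local \emph{minima} of $d_*$ along $\delta$ are not endpoints of non-tree edges. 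So the spanning-tree count, as stated, does not bound the number of $M$-corrections by anything involving only $\beta_1(G)$.

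The paper replaces the spanning-tree bookkeeping by a direct count of the local maxima of $d_*$ along $\delta$. Call $v\in V$ a \emph{merging vertex} if at least two edges point toward $v$ in the $d_*$-orientation, and let $V_m$ be the set of such vertices. Two observations do the work: (i) on a shortest (hence simple) path $\delta$ in $G$, every interior local maximum of $d_*$ must be a merging vertex; (ii) $|V_m|=\beta_1(G)$, because in the sublevel-set filtration of $d_*$ each merging vertex creates a $1$-cycle that never dies (Lemma~\ref{lem:mv}). Consequently $d_*$ restricted to $\delta$ has at most $\beta_1(G)$ interior local maxima, so $\delta$ decomposes into at most $2(\beta_1(G)+1)$ monotone arcs; each of these lifts isometrically to $X$ (Lemma~\ref{lemma:path-out-Vm}), and gluing the lifts with fiber-corrections of length $\le M$ gives $d_X(x,y)\le d_G(\pi(x),\pi(y))+2(\beta_1(G)+1)M$ (Lemma~\ref{lemma:bound-dX}). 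The idea your sketch is missing is precisely the identification ``interior local maxima of $d_*$ on $\delta$'' $\subseteq V_m$ together with $|V_m|=\beta_1(G)$; this is what ties the number of corrections to $\beta_1(G)$ rather than to the combinatorics of an auxiliary spanning tree.
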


The proof of Theorem \ref{thm:bound-dGH-betti1-M} will be easily deduced from a sequence of technical lemmas that allow to compare the distance between pair of points $x,y \in X$ and the distance between their images $\pi(x), \pi(y) \in G$. 



A vertex $v \in V$ is called a {\em merging vertex} if it is the end point of at least two edges $e_1$ and $e_2$ that are pointing to it according to the orientation defined in Section \ref{sec:reeb-graph}. Geometrically this means that there are at least two distinct connected components of $\pi^{-1}(d_*^{-1}(d_*(v) - \e))$ that accumulate to $\pi^{-1}(v)$ as $\e>0$ goes to $0$. The set of merging vertices is denoted by $V_m$.

The following lemma provides an upper bound on the number of vertices in $V_m$.
\begin{lemma} \label{lem:mv}
The number of elements in $V_m$ is equal to $\beta_1(G)$ where $\beta_1(G)$ is the first homology group of $G$.
\end{lemma}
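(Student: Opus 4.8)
The plan is to work with the orientation of the edges of $G$ by increasing values of $d_*$ introduced in Section~\ref{sec:reeb-graph}. After this orientation $G$ becomes a finite oriented graph in which, along every edge, $d_*$ is strictly monotone; since $G$ is connected, $\beta_1(G) = |E| - |V| + 1$. The whole statement will reduce to the bookkeeping identity $\sum_{v \ne \pi(r)} (\mathrm{indeg}(v) - 1) = \beta_1(G)$, which holds as soon as one knows that every vertex other than the base point $\pi(r)$ is the head of at least one oriented edge (note $\mathrm{indeg}(\pi(r)) = 0$, since $\pi(r)$ is the unique global minimum of $d_*$, so it is never the higher endpoint of an edge).

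The first, and geometrically most substantial, step is therefore to prove that every $v \in V$ with $v \ne \pi(r)$ satisfies $\mathrm{indeg}(v) \ge 1$. Pick $x \in \pi^{-1}(v)$. Since $X$ is geodesic, there is a minimizing geodesic $\gamma : [0,d(x)] \to X$ from $r$ to $x$, along which $d \circ \gamma$ is strictly increasing. Then $\pi \circ \gamma$ is a continuous path in $G$ from $\pi(r)$ to $v$ with $d_* \circ \pi \circ \gamma = d \circ \gamma$ strictly increasing; consequently this path equals $v$ only at its final parameter (otherwise $d_*(v)$ would take two different values), so for $t$ slightly below $d(x)$ the path lies in a single edge $e$ incident to $v$ along which $d_*$ increases towards $v$. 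By definition of the orientation, $e$ points into $v$, so $\mathrm{indeg}(v) \ge 1$. (Lemma~\ref{lemma:geodesic-G}, which identifies $d_*(p)$ with $d_G(\pi(r),p)$, is what makes this orientation coherent.)

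Next I would realise $\beta_1(G)$ concretely. For each $v \ne \pi(r)$ choose one incoming edge $\rho(v)$, and put $T = \{\rho(v) : v \ne \pi(r)\}$. The map $v \mapsto \rho(v)$ is injective (a vertex is recovered as the head of its chosen edge), so $|T| = |V| - 1$; and iterating $\rho$ from any vertex strictly decreases $d_*$, hence terminates, and can only terminate at the one vertex without a chosen incoming edge, namely $\pi(r)$. Thus $(V,T)$ is a spanning tree and $|E \setminus T| = \beta_1(G)$. Every edge points into its higher endpoint, which is never $\pi(r)$; so each $e \in E \setminus T$ has a head $v$ with $\mathrm{indeg}(v) \ge 2$ (it carries both $\rho(v)$ and $e$), i.e. $v \in V_m$. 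This gives a map $E \setminus T \to V_m$ which is onto, because any $v \in V_m$ has an incoming edge other than $\rho(v)$, necessarily outside $T$. Hence $|V_m| \le \beta_1(G)$, and more precisely $\beta_1(G) = |E \setminus T| = \sum_{v \in V_m} (\mathrm{indeg}(v) - 1)$.

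The main obstacle is the equality itself. The identity $\sum_{v \in V_m}(\mathrm{indeg}(v) - 1) = \beta_1(G)$ yields $|V_m| = \beta_1(G)$ exactly when every merging vertex has in-degree two — which is the case one has in mind, and the only case actually used downstream (a vertex of in-degree $k$ should morally be counted as carrying $k-1$ merges); without that convention one only gets the inequality $|V_m| \le \beta_1(G)$, which is the upper bound the later estimates require. The remaining delicacy is not combinatorial but geometric: it is the use of the geodesic structure of $X$, via the ascending path $\pi \circ \gamma$, to guarantee both that every non-root vertex is the head of some edge and that this path does not reach $v$ prematurely — everything else is the elementary spanning-tree count above.
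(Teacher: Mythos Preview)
Your argument is correct and takes a genuinely different route from the paper. The paper invokes persistent homology: it observes that the sublevel sets of $d_*$ are path connected (so the $0$-dimensional persistence is trivial), and then asserts that the merging vertices are exactly the ``positive'' simplices for $H_1$ in the sublevel-set filtration, each creating a cycle that never dies; hence $|V_m| = \beta_1(G)$. Your proof is purely combinatorial: you orient the edges by increasing $d_*$, use a geodesic in $X$ to show every non-root vertex has positive in-degree, build a spanning tree by choosing one incoming edge per non-root vertex, and read off $\beta_1(G) = \sum_{v \in V_m}(\mathrm{indeg}(v)-1)$ from the leftover edges. This is more elementary and self-contained, and it makes the dependence on the graph structure completely transparent without appealing to persistence machinery.

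You are also right to flag the in-degree issue, and it is worth pointing out that the paper's proof has exactly the same gap: a merging vertex of in-degree $k$ produces $k-1$ positive edges (cycles) in the persistence filtration, not one, so the persistence argument also yields $\sum_{v \in V_m}(\mathrm{indeg}(v)-1) = \beta_1(G)$ rather than $|V_m| = \beta_1(G)$. Both proofs therefore establish only $|V_m| \le \beta_1(G)$ in general, with equality precisely when every merging vertex has in-degree two. As you note, the inequality is all that Lemma~\ref{lemma:bound-dX} actually uses (the line ``$|V_m| \le \beta_1(G)$'' appears explicitly there), so nothing downstream is affected; but the lemma as stated is slightly too strong, and your proposal is the more careful of the two in acknowledging this.
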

\begin{proof}
The result follows from classical homology persistence theory \cite{e-cti-10}. First remark that, as $\pi(r)$ is the only local minimum of $d_*$, the sublevel sets of the function $d_* : G \to \R_+$ are all path connected. Indeed if $\pi(x), \pi(y) \in G$ are in the same sublevel set $d_*^{-1}([0,\alpha])$, $\alpha >0$, then the images by $\pi$ of the shortest paths in $X$ connecting $x$ to $r$ and $y$ to $r$ are contained in $d_*^{-1}([0,\alpha])$ and their union is a continuous path joining $\pi(x)$ to $\pi(y)$.
As a consequence, the $0$-dimensional persistence of $d_*$ is trivial. So, for the persistence algorithm applied to $d_*$, the vertices of $V_m$ are the positive simplices (see \cite{elz-tps-02} for the notion of positive and negative simplices for persistence). It follows that each vertex of $V_m$ creates a cycle that never dies as $G$ is one dimensional and does not contain any $2$-dimensional simplex. Thus $|V_m| = \beta_1(G)$. 
\end{proof}

\begin{lemma} \label{lemma:path-out-Vm}
Let $p, p' \in G$ and let $\delta : [d_*(p), d_*(p')] \to G$ be a strictly increasing path going from $p$ to $p'$ that does not contain any point of $V_m$ in its interior. Then for any $x' \in \pi^{-1}(p') \cap cl(\pi^{-1}(\delta(d_*(p), d_*(p')))$ where $cl(.)$ denotes the closure, there exists a shortest path $\gamma$ connecting a point $x$ of $\pi^{-1}(p)$ to $x'$ such that $\pi(\gamma)= \delta$ and $d_X(x,x') = d(x') - d(x) = d_*(p') - d_* (p) = d_G(p,p')$. 
\end{lemma}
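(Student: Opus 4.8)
The plan is to \emph{lift} the path $\delta$ to $X$ by following minimizing geodesics emanating from $r$. Write $a = d_*(p)$, $b = d_*(p')$ and parametrize $\delta$ by its $d_*$-value, so that $d_*\circ\delta = \mathrm{id}$ on $[a,b]$. Using the hypothesis $x'\in cl(\pi^{-1}(\delta((a,b))))$, pick a sequence $y_n \in \pi^{-1}(\delta((a,b)))$ with $y_n \to x'$; writing $\pi(y_n) = \delta(t_n)$ with $t_n \in (a,b)$ and using that $d$ is continuous with $d(y_n) = d_*(\delta(t_n)) = t_n$, one gets $t_n \to b$. The whole argument is run through the $y_n$ rather than directly at $x'$ because $p'$ itself is allowed to be a merging vertex.

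For each $n$, let $\sigma_n : [0,t_n] \to X$ be a minimizing geodesic from $r$ to $y_n$, parametrized by arc length, so $d\circ\sigma_n = \mathrm{id}$ and $\pi\circ\sigma_n$ is a strictly increasing path in $G$ from $\pi(r)$ to $\delta(t_n)$. The crucial step is to show $\pi\circ\sigma_n = \delta$ on all of $[a,t_n]$. Every point $\delta(s)$ with $s\in(a,b)$ is neither a merging vertex (by hypothesis), nor a local maximum of $d_*$ (since $d_*$ increases strictly along $\delta$), nor equal to $\pi(r)$; hence exactly one edge of $G$ enters it from below. Since $d_*$ is monotone along each edge of $G$, any strictly increasing path arriving at $\delta(s)$ must enter through that unique edge and therefore coincide locally with $\delta$. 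A connectedness/continuity argument — set $c_n = \inf\{c\in[a,t_n] : \pi\circ\sigma_n = \delta \text{ on } [c,t_n]\}$ and derive a contradiction from $c_n>a$ by applying the previous remark at $\delta(c_n)$ — then yields $\pi\circ\sigma_n = \delta$ on $[a,t_n]$. Setting $\gamma_n := \sigma_n|_{[a,t_n]}$ gives a shortest path from $x_n := \sigma_n(a) \in \pi^{-1}(p)$ to $y_n$ with $\pi\circ\gamma_n = \delta|_{[a,t_n]}$ and length $t_n - a$.

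It remains to pass to the limit. Extend each $\gamma_n$ to $[a,b]$ by the constant value $y_n$ on $[t_n,b]$; the extended maps are uniformly $1$-Lipschitz into the compact space $X$, so by Arzel\`a--Ascoli a subsequence converges uniformly to a $1$-Lipschitz path $\gamma : [a,b] \to X$ with $\gamma(a) = x \in \pi^{-1}(p)$ (a closed set, so the limit lands in it), $\gamma(b) = x'$, and $\pi\circ\gamma = \delta$ (equality holds on $[a,b)$ because $t_n\to b$, and at $b$ by continuity). Then $d\circ\gamma = d_*\circ\delta = \mathrm{id}$, so $d(x') - d(x) = b-a$; by lower semicontinuity of length, $\mathrm{length}(\gamma) \le \liminf(t_n - a) = b-a$, while $\mathrm{length}(\gamma) \ge d_X(x,x') \ge |d(x)-d(x')| = b-a$, forcing equality everywhere. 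Hence $\gamma$ is a shortest path, $d_X(x,x') = d(x') - d(x) = d_*(p') - d_*(p)$, and this last quantity equals $d_G(p,p')$ by Lemma~\ref{lemma:geodesic-G}.

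The main obstacle is the middle step: showing that a minimizing geodesic from $r$ to a point lying over the arc $\delta((a,b))$ must project onto $\delta$ once it has descended below $p'$. This is precisely where the "no merging vertex in the interior of $\delta$" assumption enters, and it requires the local analysis of the (possibly non-manifold) edge structure of $G$ at the points of $\delta((a,b))$ together with the strict monotonicity of $d_*$ along edges. Everything else — the concatenation-free treatment, the passage to the limit, and the final isometry identities — is then routine.
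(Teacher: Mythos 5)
Your proof is correct and follows essentially the same route as the paper: lift $\delta$ by restricting minimizing geodesics from $r$, use the absence of merging vertices in the interior of $\delta$ to force the lifted geodesic's projection to coincide with $\delta$, and handle the endpoint via the closure hypothesis by taking a sequence $y_n \to x'$ and applying Arzel\`a--Ascoli. The only cosmetic difference is that you run the approximating sequence in all cases, whereas the paper treats a non-merging $p'$ directly and reserves the limiting argument for the merging case.
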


Notice that from Lemma \ref{lemma:geodesic-G}, $\delta$ is a shortest path and the parametrization by the interval $[d_*(p), d_*(p')]$ can be chosen to be an isometric embedding.

\begin{proof}
First assume that $p'$ is not a merging point. 
Let $\gamma_0 : [0,d(x')] \to X$ be any shortest path between $r$ and $x'$ and let $\gamma$ be the restriction of $\gamma_0$ to $[d_*(p),d(x')] = [d_*(p),d_*(p')]$. 
If the infimum $t_0$ of the set $I = \{ t \in [d_*(p),d_*(p')] : \pi(\gamma(t')) \in \delta,\ \forall t' \geq t \}$ is larger than $d_*(p)$, then $\pi(\gamma(t_0))$ then there exists an increasing sequence $(t_n)$ that converges to $t_0$ such that $\gamma(t_n) \not \in \delta$. As a consequence $\delta(t_0)$ is a merging point; a contradiction. So $t_0 = d_*(p)$ and $\gamma(d_*(p))$ intersects $\pi^{-1}(p)$ at a point $x$. 

Now if $p'$ is a merging point, as $x'$ is chosen in the closure of $\pi^{-1}(\delta(d_*(p), d_*(p'))$, for any sufficiently large $n \in \mathbb{N}$ one can consider a sequence of points $x'_n \in \pi^{-1}(\delta(d_*(p') - 1/n))$ that converges to $x'$ and apply the first case to get a sequence of shortest path $\gamma_n$ from a point $x_n \in \pi^{-1}(p)$ and $x'_n$. Then applying Arzel\`a-Ascoli's theorem (see \cite{die-69} 7.5) we can extract from $\gamma_n$ a sequence of points converging to a shortest path $\gamma$ between a point $x \in \pi^{-1}(p)$ and $x'$.

To conclude the proof, notice that from Lemma \ref{lemma:geodesic-G} we have $d_G(p,p') = d_*(p') - d_* (p) = d(x') - d(x)$. Since $\gamma$ is the restriction of a shortest path from $r$ to $x$ we also have $d_X(x,x') = d(x') - d(x)$.
\end{proof}

Lemma \ref{lemma:bound-dX} and Lemma \ref{lemma:pi-Lipschitz} allow to compare the metrics $d_X$ and $d_G$. 

\begin{lemma} \label{lemma:bound-dX}
For any $x, y \in X$ we have 
\[
d_X(x,y) \leq d_G(\pi(x),\pi(y)) + 2 (\beta_1(G)+1) M
\]
where $\beta_1(G)$ is the first Betti number of $G$ and $M = \sup_{p\in G}\{\text{diameter}(\pi^{-1}(p))\}$.
\end{lemma}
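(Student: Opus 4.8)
The plan is to bound $d_X(x,y)$ from above by constructing an explicit path in $X$ from $x$ to $y$ whose length is controlled by $d_G(\pi(x),\pi(y))$ plus a small "overhead" accumulated each time we cross a merging vertex. First I would fix a shortest path $\sigma$ in $G$ from $\pi(x)$ to $\pi(y)$, so that $\mathrm{length}(\sigma) = d_G(\pi(x),\pi(y))$. Since $G$ is a finite graph and $d_*$ is monotone along each edge, $\sigma$ can be decomposed into a finite alternating sequence of maximal monotone subpaths; moreover I would arrange that the only vertices at which $\sigma$ changes the direction of monotonicity of $d_*$, or passes through a branch point where the lift could fail, are among the merging vertices of $V_m$ (ascending subpaths that run into a non-merging vertex can be prolonged or lifted directly via Lemma \ref{lemma:path-out-Vm}). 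The number of such "bad" points along $\sigma$ is at most $|V_m| = \beta_1(G)$ by Lemma \ref{lem:mv}, so $\sigma$ is cut into at most $\beta_1(G)+1$ monotone pieces.

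Next, on each monotone piece $\delta_i$ of $\sigma$ (say $d_* \circ \delta_i$ strictly increasing, the decreasing case being symmetric by reversing orientation), I would invoke Lemma \ref{lemma:path-out-Vm}: for any chosen endpoint preimage in the appropriate level set and closure, there is a shortest path $\gamma_i$ in $X$ lifting $\delta_i$ whose length equals $\mathrm{length}(\delta_i)$. The difficulty is that these lifts $\gamma_i$ need not concatenate: the terminal point of $\gamma_i$ and the initial point of $\gamma_{i+1}$ both lie in $\pi^{-1}(v)$ for the intermediate vertex $v$, but are generally distinct points of that fiber. I would bridge each such gap by a path in $X$ of length at most $\mathrm{diam}(\pi^{-1}(v)) \le M$ — concretely using that $X$ is geodesic, any two points of $\pi^{-1}(v)$ are joined by a geodesic of length $\le M$. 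Likewise I must pay (at most) $M$ at the two ends $x,\pi(x)$ and $y,\pi(y)$ to match the lift endpoints to $x$ and $y$ themselves, since Lemma \ref{lemma:path-out-Vm} only guarantees lifts ending at \emph{some} point of the target fiber; these endpoint corrections contribute at most $2M$, and there are at most $\beta_1(G)$ interior bridges each of length $\le M$, giving the factor $2(\beta_1(G)+1)M$ after a slightly loose count that absorbs the two endpoint terms together with the $\beta_1(G)$ interior ones.

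Concatenating the lifts $\gamma_i$ together with these short bridging geodesics yields a path in $X$ from $x$ to $y$ of length at most $\sum_i \mathrm{length}(\delta_i) + (\text{number of bridges})\cdot M \le d_G(\pi(x),\pi(y)) + 2(\beta_1(G)+1)M$, and since $d_X(x,y)$ is the infimum of lengths of paths joining $x$ to $y$, the claimed inequality follows. The main obstacle, and the step deserving the most care, is the bookkeeping that ensures the decomposition of $\sigma$ really uses at most $\beta_1(G)$ "transition" vertices: one must argue that along a \emph{shortest} path in $G$, the places where monotonicity of $d_*$ reverses (so that the naive lift argument breaks) are exactly captured by merging vertices, using that $d_*(p) = d_G(\pi(r),p)$ (established via Lemma \ref{lemma:geodesic-G}) so that a shortest path in $G$ cannot waste length by oscillating in $d_*$ except when forced to pass through a branch that is a merging vertex. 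Once that combinatorial claim is pinned down, the metric estimate is the routine concatenation above.
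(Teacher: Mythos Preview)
Your overall strategy is the same as the paper's: take a shortest path $\sigma$ in $G$ from $\pi(x)$ to $\pi(y)$, decompose it into strictly monotone pieces avoiding merging vertices in their interior, lift each piece to $X$ via Lemma~\ref{lemma:path-out-Vm}, and bridge the gaps inside fibers at cost $\le M$ each. The concatenation argument and the use of Lemma~\ref{lem:mv} are also the same.

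There is, however, a genuine gap in your bookkeeping. You assert that the only places along a shortest path $\sigma$ where $d_*$ reverses monotonicity are merging vertices, and deduce that $\sigma$ breaks into at most $\beta_1(G)+1$ monotone pieces. This is not correct: an interior local \emph{maximum} of $d_*\!\restriction_\sigma$ must be a merging vertex (two edges pointing into it), but an interior local \emph{minimum} has two edges pointing \emph{out} of it and is therefore \emph{not} a merging vertex. So cutting only at the points of $V_m$ on $\sigma$ yields at most $|V_m|+1$ subpaths with no interior merging vertex and no interior local maximum, but each such subpath can still contain one interior local minimum. A further cut at that minimum is needed to obtain strictly monotone pieces to which Lemma~\ref{lemma:path-out-Vm} applies, and this doubles the count to at most $2(|V_m|+1)=2(\beta_1(G)+1)$ pieces.

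This is precisely where the factor $2(\beta_1(G)+1)$ in the bound comes from in the paper's proof: with $k\le 2(\beta_1(G)+1)$ lifts $\gamma_i$ and $k$ bridging segments (connecting $x$ to the first lift, consecutive lifts to each other within a fiber, and the last lift to $y$), the constructed path has length at most $d_G(\pi(x),\pi(y))+kM$. Your route to $2(\beta_1(G)+1)M$ via ``a slightly loose count'' from $(\beta_1(G)+2)M$ is therefore based on an incorrect intermediate claim; once the missing cuts at local minima are inserted, the count is tight and no loosening is needed.
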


\begin{proof}
Let $\delta$ be a shortest path between $\pi(x)$ and $\pi(y)$. 
Remark that except at the points $\pi(x)$ and $\pi(y)$ the local maxima of the restriction of $d_*$ to $\delta$ are in $V_m$. Indeed as
$\delta$ is a shortest path it has to be simple, so if $p \in \delta$ is a local maximum then $p$ has to be a vertex and $\delta$ has to pass through two edges having $p$ as end point and pointing to $p$ according to the orientation defined in Section \ref{sec:reeb-graph}. So $p$ is a merging point. \\
Since $\delta$ is simple and $V_m$ is finite, $\delta$ can be decomposed in at most $|V_m|+1$ connected paths along the interior of which the restriction of $d_*$ does not have any local maxima. So along each of these connected paths the restriction of $d_*$ can have at most one local minimum. As a consequence, $\delta$ can be decomposed in a finite number of continuous paths $\delta_1, \delta_2, \cdots, \delta_k$ with $k \leq 2(|V_m|+1)$, such that the restriction of $d_*$ to each of these path is strictly monotonous. 
For any $i \in \{1, \cdots, k \}$ let $p_i$ and $p_{i+1}$ the end points of $\delta_i$ with $p_1 = \pi(x)$ and $p_{k+1} = \pi(y)$.
We can apply Lemma \ref{lemma:path-out-Vm} to each $\delta_i$ to get a shortest path $\gamma_i$ in $X$  between a point $x_i \in \pi^{-1}(p_i)$ and a point in $y_{i+1} \in \pi^{-1}(p_{i+1})$ such that $\pi(\gamma_i) = \delta_i$ and $d_X(x_i,y_{i+1}) = d_G(p_i,p_{i+1})$. The sum of the lengths of the paths $\gamma_i$ is equal to the sum of the lengths of the path $\delta_i$ which is itself equal to $d_G(\pi(x),\pi(y))$. 
Now for any $i \in \{1, \cdots, k \}$, since $\pi(x_i) = \pi(y_i)$ we have $d_X(x_i, y_i) \leq M$ and $x_i$ and $y_i$ can be connected by a path of length at most $M$ ($x_1$ is connected to $x$ and $y_{k+1}$ is connected to $y$. Gluing these paths to the paths $\gamma_i$ gives a continuous path from $x$ to $y$ whose length is at most $d_G(\pi(x),\pi(y)) + k M \leq d_G(\pi(x),\pi(y)) + 2(|V_m|+1)M$.
Since from Lemma \ref{lem:mv}, $|V_m| \leq \beta_1(G)$, we finally get that $d_X(x,y) \leq  d_G(\pi(x),\pi(y)) + 2(\beta(G)+1)M$.
\end{proof}

\begin{lemma} \label{lemma:pi-Lipschitz}
The map $\pi : X \to G$ is $1$-Lipschitz: for any $x, y \in X$ we have
\[
d_G(\pi(x),\pi(y)) \leq d_X(x,y).
\]
\end{lemma}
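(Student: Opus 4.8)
The plan is to bound $d_G(\pi(x),\pi(y))$ by the $d_G$-length of the image under $\pi$ of a geodesic of $X$, and then to check that $\pi$ does not increase the length of that curve. Since $X$ is geodesic, fix a minimizing geodesic $\gamma:[0,\ell]\to X$ from $x$ to $y$ parametrized by arc length, with $\ell=d_X(x,y)$. Then $\pi\circ\gamma:[0,\ell]\to G$ is a continuous path joining $\pi(x)$ to $\pi(y)$, and since $(G,d_G)$ is a length space --- it is the quotient of a disjoint union of real segments --- $d_G(\pi(x),\pi(y))$ is at most the length $L_G(\pi\circ\gamma)$ of this path. It therefore suffices to prove $L_G(\pi\circ\gamma)\le \ell$.

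The key step is the identity $L_G(\pi\circ\gamma)=\mathrm{TV}_{[0,\ell]}(d_*\circ\pi\circ\gamma)$, the total variation of the real valued function $t\mapsto d_*(\pi(\gamma(t)))$. This comes directly from the construction of $d_G$: each edge $e$ of $G$ is isometric, via $d_*$, to a real interval, and $G$ is obtained by gluing these intervals at matching endpoints; hence along any sub-arc of $\pi\circ\gamma$ contained in a single closed edge, the $d_G$-length and the variation of $d_*$ coincide, and summing over a decomposition of $\pi\circ\gamma$ into such sub-arcs gives the claimed equality. One can organize this computation by splitting $[0,\ell]$ into the (maximal) intervals on which $d\circ\gamma$ is monotone and applying Lemma~\ref{lemma:geodesic-G} on each of them, after noting that whenever $d\circ\gamma$ is constant on a subinterval the curve $\gamma$ stays in a single path component of a level set of $d$, so $\pi\circ\gamma$ is constant there; on the remaining part, where $\pi\circ\gamma$ sits in the finite vertex set of $G$, the curve is likewise locally constant.

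To conclude, observe that $d_*\circ\pi=d=d_X(r,\cdot)$, so the function $t\mapsto d(\gamma(t))=d_X(r,\gamma(t))$ is $1$-Lipschitz, by the triangle inequality together with $\gamma$ being parametrized by arc length; hence its total variation over an interval of length $\ell$ is at most $\ell$. Combining the three observations, $d_G(\pi(x),\pi(y))\le L_G(\pi\circ\gamma)=\mathrm{TV}_{[0,\ell]}(d\circ\gamma)\le \ell=d_X(x,y)$.

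I expect the only real subtlety to be making the length/total-variation identity of the second paragraph fully rigorous: a generic continuous path in $G$ need not meet the vertex set in a convenient way, so the decomposition into single-edge sub-arcs (or into monotone pieces of $d\circ\gamma$) must be handled with a little care. It is, however, routine given that $G$ has finitely many edges and $d_*$ is injective on each of them, so every level set of $d_*$ in $G$ is finite; everything else --- the length-space inequality and the $1$-Lipschitz property of the distance-to-$r$ function --- is immediate.
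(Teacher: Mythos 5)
Your overall strategy is sound and is, at heart, the same as the paper's: push a minimizing geodesic $\gamma$ from $x$ to $y$ into $G$, use the fact that on each closed edge the metric $d_G$ coincides with the difference of $d_*$ (so increments of $d_G$ along $\pi\circ\gamma$ are increments of $d=d_*\circ\pi\circ\gamma$), and then control the total increment by $d_X(x,y)$ using the $1$-Lipschitzness of $d=d_X(r,\cdot)$. The paper implements exactly this by trimming $\pi(\gamma)$ at the first/last hitting times of each of the finitely many vertices, so that each remaining piece lies in a single closed edge, and then bounding each $d_G$-increment by the corresponding $d_X$-increment along the geodesic; your ``length equals total variation'' packaging, with the final bound $\mathrm{TV}_{[0,\ell]}(d\circ\gamma)\le\ell$, is an equivalent way of summing the same estimates. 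Note also that you only need the inequality $d_G(\pi(x),\pi(y))\le \mathrm{TV}_{[0,\ell]}(d\circ\gamma)$; the reverse half of your identity (via $1$-Lipschitzness of $d_*$) plays no role, so proving the full equality $L_G(\pi\circ\gamma)=\mathrm{TV}(d\circ\gamma)$ is more than required.

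The one detail that does not work as stated is the proposed rigorization by ``splitting $[0,\ell]$ into the maximal intervals on which $d\circ\gamma$ is monotone.'' Nothing in the hypotheses forces $d\circ\gamma$ to be piecewise monotone: it is merely $1$-Lipschitz, and a $1$-Lipschitz function can fail to be monotone on every subinterval, so this decomposition need not exist (and in particular need not be finite), and Lemma~\ref{lemma:geodesic-G} moreover requires \emph{strict} monotonicity of $d_*$ along the image path, which a monotone piece of $d\circ\gamma$ does not give directly. The finiteness you should exploit is not that of the level sets of $d_*$ but that of the vertex set of $G$: cut $[0,\ell]$ (as the paper does) at the first and last times $\pi\circ\gamma$ meets each vertex, or prove the single-increment inequality $d_G(\pi\gamma(s),\pi\gamma(t))\le \mathrm{TV}_{[s,t]}(d\circ\gamma)$ by induction on the number of vertices met by $\pi\circ\gamma$ on $(s,t)$ (cutting at the first and last hitting time of one such vertex), using that a piece whose interior avoids the vertices lies in one closed edge, where $d_G(p,q)=|d_*(p)-d_*(q)|$; summing over a partition then gives what you need. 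With that replacement your argument is complete; your observations that $d_G(\pi(x),\pi(y))\le L_G(\pi\circ\gamma)$, that $d=d_*\circ\pi$ is $1$-Lipschitz along a unit-speed geodesic, and that $\pi\circ\gamma$ is constant on intervals where $d\circ\gamma$ is constant are all correct (the last one is in fact not needed in the vertex-cutting route).
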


\begin{proof}
Let $x,y \in X$ and let $\gamma : I \to X$ be a shortest path from $x$ to $y$ in $X$ where $I \subset \R$ is a closed interval. 
The path $\pi(\gamma)$ connects $\pi(x)$ and  $\pi(y)$ in $G$. 

We first claim that there exists a continuous path $\Gamma$ contained in $\pi(\gamma)$ connecting  $\pi(x)$ and  $\pi(y)$ that intersects each vertex of $G$ at most one time. The path $\Gamma$ can be defined by iteration in the following way. Let $v_1, \cdots v_n \in V$ be the vertices of $G$ that are contained in $\pi(\gamma) \setminus \{ \pi(x), \pi(y) \}$ and let $\Gamma_0 = \pi(\gamma) : J_0=I \to G$. For $i=1, \cdots n$ let  $t_i^- = \inf \{ t : \Gamma_{i-1}(t) = v_i \}$ and  $t_i^+ = \sup \{ t : \Gamma_{i-1}(t) = v_i \}$ and define $\Gamma_i$ as the restriction of $\Gamma_{i-1}$ to $J_i = J_{i-1} \setminus (t_i^-, t_i^+)$. The path $\Gamma_i$ is a connected continuous path (although $J_i$ is a disjoint union of intervals) that intersects the vertices $v_1, v_2, \cdots, v_i$ at most one time. We then define $\Gamma = \Gamma_n : J= J_n \to G$ where $J \subset I$ is a finite union of closed intervals. Notice that $\Gamma$ is the image by $\pi$ of the restriction of $\gamma$ to $J$ and that $\Gamma(t) \in \{ v_1, \cdots v_n \}$ only if $t$ is one of the endpoints of the closed intervals defining $J$. 

Now, for each connected component $[t,t']$ of $J$, $\gamma((t,t'))$ is contained in $\pi^{-1}(e)$ where $e$ is the edge of $G$ containing $\Gamma ([t,t'])$. As a consequence, $d_G(\pi(\gamma)(t),\pi(\gamma)(t')) = |d_*(\pi(\gamma)(t) - d_*(\pi(\gamma)(t'))| = |d(\gamma(t)) - d(\gamma(t'))|$. 
Recalling that $d(\gamma(t)) = d_X(r,\gamma(t))$ and $d(\gamma(t')) = d_X(r,\gamma(t'))$ and using the triangle inequality we get that $|d(\gamma(t)) - d(\gamma(t'))| \leq d_X(\gamma(t),\gamma(t'))$. 
To conclude the proof, since $\gamma$ is a geodesic path we just need to sum up the previous inequality over all connected components of $J$:
\[
d_X(x,y) \geq \sum_{[t,t'] \in cc(J)} d_X(\gamma(t),\gamma(t')) \geq \sum_{[t,t'] \in cc(J)} d_G(\pi(\gamma)(t),\pi(\gamma)(t')) \geq d_G(\pi(x), \pi(y))
\]
where $cc(J)$ is the set of connected components of $J$.
\end{proof}

The proof of Theorem \ref{thm:bound-dGH-betti1-M} now easily follows from Lemmas \ref{lemma:bound-dX} and \ref{lemma:pi-Lipschitz}.

\begin{proof} {\em (of Theorem \ref{thm:bound-dGH-betti1-M})}
Consider the set $C = \{ (x,\pi(x)): x \in X \} \subset X \times G$. As $\pi$ is surjective this is a correspondence between $X$ and $G$. It follows from Lemmas \ref{lemma:bound-dX} and \ref{lemma:pi-Lipschitz} that for any $(x,\pi(x)), (y,\pi(y)) \in C$, 
\[
|d_X(x,y) - d_G(\pi(x),\pi(y))| \leq 2 (\beta_1(G)+1) M
\]
where $\beta_1(G)$ is the first Betti number of $G$ and $M = \sup_{p\in G}\{\text{diameter}(\pi^{-1}(p))\}$. So $C$ is a $(2 (\beta_1(G)+1) M$-correspondence and $d_{GH}(X,G) \leq (\beta_1(G)+1) M$.
\end{proof}

\subsection{Bounding $M$}  \label{sec:M-bound}

To upperbound the diameter of the level sets of $\pi$ we first prove the two following general lemmas. 

\begin{lemma} \label{lemma:edge-distance-root}
Let $(G, d_G)$ be a connected finite metric graph and let $r \in G$. We denote by $d_r = d_G(r,.) : G \to [0,+\infty)$ the distance to $r$. 
For any edge $E \subset G$, the restriction of $d_r$ to $e$ is either strictly monotonous or has only one local maximum.
Moreover the length $l = l(E)$ of $E$ is upper bounded by two times the difference between the maximum and the minimum of $d_r$ restricted to $E$.
\end{lemma}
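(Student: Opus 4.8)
The plan is to obtain an explicit formula for $d_r$ along the edge and read off both conclusions from it. Fix an arc-length parametrisation $\phi:[0,\ell]\to E$, with endpoints $a=\phi(0)$ and $b=\phi(\ell)$, and set $f(t)=d_r(\phi(t))$, $\alpha=f(0)=d_G(r,a)$, $\beta=f(\ell)=d_G(r,b)$. We may assume $r$ is not in the relative interior of $E$: otherwise subdivide the graph at $r$, which changes neither $(G,d_G)$ nor $d_r$ and only refines the edge set, so it is enough to treat the two sub-edges created. The key claim is that
\[
f(t)=\min\bigl(\alpha+t,\ \beta+\ell-t\bigr)\qquad\text{for all }t\in[0,\ell].
\]

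The upper bound is immediate: by the triangle inequality $f(t)=d_G(r,\phi(t))\le d_G(r,a)+d_G(a,\phi(t))\le\alpha+t$, and similarly $f(t)\le\beta+(\ell-t)$, using that $E$ contains a path of length $t$ (resp.\ $\ell-t$) from $a$ (resp.\ $b$) to $\phi(t)$. For the reverse inequality, fix $t\in(0,\ell)$, let $\gamma:[0,L]\to G$ be a shortest path from $r$ to $\phi(t)$ parametrised by arc length ($L=f(t)$), and recall that the relative interior $\mathrm{int}(E)$ is a connected component of $G\setminus\{a,b\}$. Since $\gamma(0)=r\notin\mathrm{int}(E)$ while $\gamma(L)=\phi(t)\in\mathrm{int}(E)$, the path $\gamma$ meets $\{a,b\}$; let $\tau$ be the largest parameter with $\gamma(\tau)\in\{a,b\}$ and put $c=\gamma(\tau)$. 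Then $\tau<L$, and $\gamma((\tau,L])$ is a connected subset of $G\setminus\{a,b\}$ containing $\phi(t)$, hence contained in $\mathrm{int}(E)$; so $\gamma|_{[\tau,L]}$ is a path lying in $E$ from $c$ to $\phi(t)$, of length $L-\tau$ at least the intrinsic $E$-distance from $c$ to $\phi(t)$, which is $t$ if $c=a$ and $\ell-t$ if $c=b$. Since also $\tau\ge d_G(r,c)=d_r(c)$, adding gives $L\ge d_r(c)+(t\text{ or }\ell-t)\ge\min(\alpha+t,\beta+\ell-t)$, proving the claim (the cases $t\in\{0,\ell\}$ follow from $\alpha\le\beta+\ell$ and $\beta\le\alpha+\ell$).

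Given the formula, $f$ is the pointwise minimum of the strictly increasing affine map $t\mapsto\alpha+t$ and the strictly decreasing affine map $t\mapsto\beta+\ell-t$. Their difference is strictly increasing and vanishes at $t^\star=(\beta+\ell-\alpha)/2$, so: if $t^\star\le 0$ then $f$ is strictly decreasing on $[0,\ell]$; if $t^\star\ge\ell$ then $f$ is strictly increasing; and if $0<t^\star<\ell$ then $f$ is strictly increasing on $[0,t^\star]$ and strictly decreasing on $[t^\star,\ell]$, hence has a single local maximum and no interior local minimum. This is the first assertion. For the length bound, a short check of these three cases yields the identity $\ell=\bigl(\max_{[0,\ell]}f-f(0)\bigr)+\bigl(\max_{[0,\ell]}f-f(\ell)\bigr)$; since each summand is nonnegative and at most $\max_E d_r-\min_E d_r$, we conclude $\ell(E)\le 2\bigl(\max_E d_r-\min_E d_r\bigr)$.

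The only real content is the reverse inequality $f(t)\ge\min(\alpha+t,\beta+\ell-t)$, i.e.\ the topological fact that a shortest path from $r$ to an interior point of $E$ must enter $E$ through one of its endpoints and then remain in $E$; everything else is bookkeeping with piecewise-linear functions of slope $\pm1$. The one point where the statement as phrased needs care is the degenerate case $r\in\mathrm{int}(E)$ — there $d_r|_E$ has an interior minimum — which is exactly why we reduce at the outset to $r$ being a vertex; in the intended application $G$ is the metric Reeb graph and $\pi(r)$ is a vertex by construction, so this reduction is automatic.
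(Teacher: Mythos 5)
Your proof is correct and rests on exactly the same key fact as the paper's argument: a shortest path from $r$ to an interior point of $E$ must enter $E$ through one of its endpoints and stay inside $E$ afterwards; the paper organizes this via the two closed intervals $I_1,I_2$ of parameters whose geodesics pass through $x_1$ resp.\ $x_2$ and their single intersection point $t_0$, whereas you package the same information as the explicit formula $d_r(\phi(t))=\min(\alpha+t,\ \beta+\ell-t)$, from which both conclusions follow by the same bookkeeping the paper does at $t_0$. Your explicit reduction of the degenerate case $r\in\mathrm{int}(E)$ (which the statement and the paper's proof tacitly exclude, since there the claim ``strictly monotone or a single local maximum'' literally fails) is a harmless and welcome refinement rather than a deviation.
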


\begin{proof}
Let $l$ be the length of $E$ and let $t \mapsto e(t)$, $t \in [0,l]$, be an arc length parametrization of $E$.
Since $E$ is an edge of $G$, for $t \in [0,l]$ any shortest geodesic $\gamma_t$ joining $r$ to $e(t)$ must contain either $x_1 = e(0)$ or $x_2 = e(l)$. If it contains $x_1$ then for any $t'<t$ the restriction of $\gamma_t$ between $r$ and $e(t')$ is a shortest geodesic containing $x_1$ and if it contains $x_2$ then for any $t'>t$ the restriction of $\gamma_t$ between $r$ and $e(t')$ is a shortest geodesic containing $x_2$. Moreover in both cases, the function $d_r$ is strictly monotonous along $\gamma$.
As a consequence, the set $I_1 = \{ t \in [0,l]:$~a shortest geodesic joining $r$ to $e(t)$ contains $x_1 \}$ is a closed interval containing $0$. Similarly the set $I_2 = \{ t \in [0,l]:$~a shortest geodesic joining $r$ to $e(t)$ contains $x_2 \}$ is a closed interval containing $l$ and $[0,l] = I_1 \cup I_2$. Moreover $d_r$ is strictly monotonous on $e(I_1)$ and on $e(I_2)$.
As a consequence $I_1 \cap I_2$ is reduced to a single point $t_0$ that has to be the unique local maximum of $d_r$ restricted to $E$. 

The second part of the lemma follows easily from the previous proof: the minimum of $d_r$ restricted to $E$ is attained either at $x_1$ or $x_2$ and $d_r(e(t_0)) = d_r(x_1)+t_0 = d_r(x_2)+l-t_0$ is the maximum of $d_r$ restricted to $E$. We thus obtain that $2 t_0 = l + (d_r(x_2) -d_r(x_1))$. As a consequence if $d_r(x_1) \leq d_r(x_2)$ then $l/2 \leq t_0 = d_r(e(t_0)) - d_r(x_1)$; similarly if  $d_r(x_1) \geq d_r(x_2)$ then $l/2 \leq l - t_0 = d_r(e(t_0)) - d_r(x_2)$.
\end{proof}

\begin{lemma} \label{proposition:band-diameter}
Let $(G, d_G)$ be a connected finite metric graph and let $r \in G$. For $\alpha > 0$ we denote by $N_E(\alpha)$ the number of edges of $G$ of length at most $\alpha$.  
For any $d >0$ and any connected component $B$ of the set $B_{d,\alpha} = \{ x \in G: d-\alpha \leq d_G(r,x) \leq d+\alpha \}$ we have
\[
diam(B) \leq 4 (2+N_E(4 \alpha)) \alpha
\]
\end{lemma}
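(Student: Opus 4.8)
The goal is to bound the diameter of a connected component $B$ of the band $B_{d,\alpha} = \{x \in G : d - \alpha \le d_G(r,x) \le d + \alpha\}$ by $4(2 + N_E(4\alpha))\alpha$. The natural approach is to walk from any point of $B$ toward the root along a shortest geodesic, and control how far this walk stays inside $B$, then show that $B$ is built out of a bounded number of ``pieces'' each of bounded length. Concretely, I would fix two points $x, y \in B$ and try to bound $d_G(x,y)$ by constructing a short path between them that stays inside (or close to) $B$.

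\medskip

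\textbf{Key steps.} First I would use Lemma~\ref{lemma:edge-distance-root}: every edge $E$ of $G$ either has $d_r = d_G(r,\cdot)$ strictly monotone along it, or $d_r$ has a single interior local maximum on $E$, and in either case $l(E) \le 2(\max_E d_r - \min_E d_r)$. The consequence I want is: if an edge $E$ meets $B$, then either $E$ is ``long'' and the portion of $E$ inside $B$ consists of one or two subsegments of $d_r$-height variation at most $2\alpha$ (hence length at most... well, need care: where $d_r$ is monotone the length of the portion with $d_r \in [d-\alpha, d+\alpha]$ is at most $2\alpha$), or $E$ is ``short'', i.e. $l(E) \le 4\alpha$, in which case $E$ lies entirely in a slightly larger band and contributes length at most $4\alpha$. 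So the edges meeting $B$ split into short edges (at most $N_E(4\alpha)$ of them) and long edges, and on each long edge the trace of $B$ has length at most $4\alpha$ (two monotone subsegments of height $\le 2\alpha$, but actually on a long edge with an interior local max, both the up-part and the down-part intersected with the band have $d_r$-variation $\le 2\alpha$, hence arclength $\le 2\alpha$ each, total $\le 4\alpha$).

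\medskip

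Next I would bound the \emph{number} of long edges that a component $B$ can touch. The idea: the subgraph $G_B$ consisting of all (closed) edges meeting $B$ is connected (since $B$ is connected and each edge-trace is connected up to the vertices), and within $G_B$, contract all short edges; what remains is a graph whose edges are long edges each contributing at most $4\alpha$ to any path. To bound a path in $B$ between $x$ and $y$: take a path $\delta$ in $G_B$ from $x$ to $y$ that is simple (visits each vertex at most once), so it uses each edge at most once. Then $\mathrm{length}(\delta) \le (\#\text{short edges used})\cdot 4\alpha + (\#\text{long edges used})\cdot(\text{trace length} \le 4\alpha)$. The number of short edges used is $\le N_E(4\alpha)$. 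For the long edges: here is the crux — I claim a simple path inside the band can traverse at most $2$ long edges. Reason: along a long edge that is monotone, $d_r$ moves by at least... no. Rather: consider the ``up/down'' structure. Actually the cleanest argument: restrict attention to the level $d_r = d$ is not quite it either. I think the right claim is that between consecutive ``deep dips'' the path must exit the band, so on a simple path the monotone long-edge excursions are limited; more carefully, order the long edges along $\delta$ and observe $d_r$ along $\delta$ restricted to each long edge moves monotonically through an interval of length $\le 2\alpha$, and a simple path cannot use too many such — but without more structure a simple path could thread many long edges each contributing $\le 4\alpha$. So I suspect the bound actually comes out as: (number of long edges) $\le$ (number of short edges)$+1$ is false too. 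Let me reconsider: the stated bound $4(2+N_E(4\alpha))\alpha$ suggests $\mathrm{diam}(B) \le 4\alpha\cdot(N_E(4\alpha)) + 8\alpha$, i.e. essentially each short edge contributes $4\alpha$ plus a constant $8\alpha$ slack. So the real claim must be: a component $B$ touches at most $N_E(4\alpha)$ short edges, and a shortest path between two points of $B$ can be routed so that it uses each of the long-edge traces in total contributing only $8\alpha$ — because all long edges meeting $B$ actually ``hang off'' the short-edge skeleton and you only need to go up a long edge and back down. Precisely: contract short edges; the resulting graph, restricted to where $B$ lives, is a union of long edges, and $B$ within it is contained in a neighborhood of a single vertex-blob plus pendant excursions; a path from $x$ to $y$ goes: down $x$'s pendant long-edge trace ($\le 4\alpha$), through the short-edge blob ($\le N_E(4\alpha)\cdot 4\alpha$), up $y$'s pendant trace ($\le 4\alpha$), total $\le (N_E(4\alpha)+2)\cdot 4\alpha$.

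\medskip

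\textbf{Main obstacle.} The hard part is making rigorous the claim that the long edges meeting $B$ attach to the short-edge skeleton as ``pendants'' in a way that a simple path needs only two of them. This requires showing that two distinct long edges meeting $B$ cannot both be used by a simple path inside the band unless separated by short edges — equivalently, that the long edges appear only at the two ends of any simple path in $G_B$. I would prove this by a connectivity/monotonicity argument: if a simple path $\delta$ inside $B$ enters a long edge $E$ at a vertex $v$ with $d_r(v) < d$ (the edge going ``up''), then to leave the band and come back to reach another long edge it would have to re-cross a level, contradicting simplicity combined with Lemma~\ref{lemma:edge-distance-root}'s single-local-max structure. I'd need to handle the two cases (monotone long edge vs. long edge with interior max) separately and track the sign of $d_r - d$ carefully along $\delta$. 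This case analysis, together with correctly identifying ``connected component of the band'' versus ``connected component of the subgraph of edges meeting the band,'' is where the technical weight lies; the length accounting itself is routine once the combinatorial structure is pinned down.
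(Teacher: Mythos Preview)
Your plan has the right ingredients but you are making it far harder than it needs to be, and the ``main obstacle'' you identify is an artifact of this overcomplication rather than a genuine difficulty.

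The key observation you are missing is this: take any continuous path $\gamma$ in $B$ from $x$ to $y$ and simplify it edge by edge. If $E$ is an edge with endpoints $x_1,x_2$ (and $x,y\notin E$) whose interior $\gamma$ meets, then each component of $\gamma^{-1}(E)$ is an interval $[t,t']$ with $\gamma(t),\gamma(t')\in\{x_1,x_2\}$. If $\gamma(t)=\gamma(t')$, excise that loop; otherwise $\gamma$ traverses all of $E$. After this clean-up, every edge that $\gamma$ crosses in full is \emph{entirely contained in} $B$. By Lemma~\ref{lemma:edge-distance-root}, an edge contained in $B$ has $d_r$-oscillation at most $2\alpha$ and hence length at most $4\alpha$: so every full edge on the simplified path is automatically a short edge. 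There is no need to reason about long edges at all, except for the two partial edge-segments $[x,v_0]$ and $[v_k,y]$ at the ends, which lie in $B$ and hence also have length at most $4\alpha$ (split each into at most two monotone arcs). This gives length $\le 4(k+2)\alpha$ with $k\le N_E(4\alpha)$, and you are done.

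In other words, your conjectured structure---that a simple path in $B$ meets at most two long edges, as pendants at the ends---is correct, but the clean proof of it is not the monotonicity/level-crossing case analysis you propose; it is the one-line remark that a full edge traversed by a path lying in $B$ must itself lie in $B$ and is therefore short. Your distinction between ``long edges meeting $B$'' and ``short edges'' is the wrong bookkeeping; the right distinction is between edges \emph{contained in} $B$ (all short) and the two boundary edges containing $x$ and $y$.
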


The example of figure \ref{fig:diambound} shows that the bound of Lemma \ref{proposition:band-diameter} is tight. 

\begin{figure}[!h]
\centering
\includegraphics[height=0.3\textwidth]{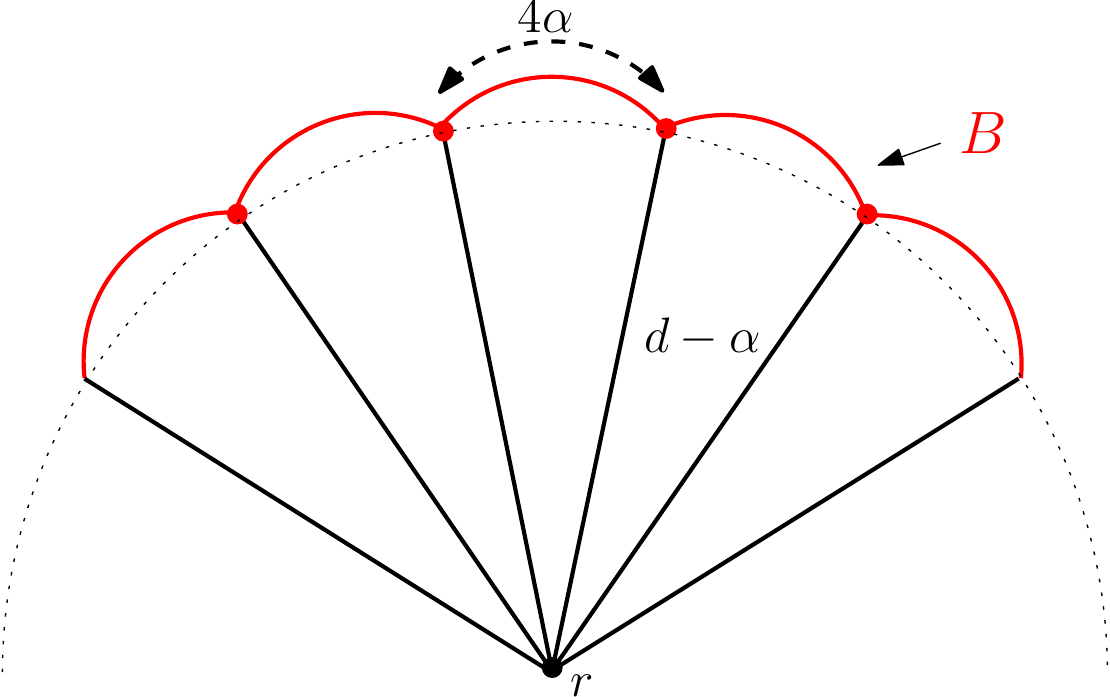} 
\caption{Tightness of the bound in Lemma \ref{proposition:band-diameter}: there are 3 edges of length at most $4\alpha$ and the diameter of $B$ is equal to $20 \alpha$. 
\label{fig:diambound}}
\end{figure}

\begin{proof}
Let $x,y \in B$ and let $t \mapsto \gamma(t) \in B$ be a continuous path joining $x$ to $y$ in $B$.
Let $E$ be an edge of $G$ that does not contain $x$ or $y$ and with end points $x_1,x_2$ such that $\gamma$ intersects the interior of $E$. Then $\gamma^{-1}(E)$ is a disjoint union of closed intervals of the form $I = [t,t']$ where $\gamma(t)$ and $\gamma(t')$ belong to the set $\{ x_1, x_2 \}$. If $\gamma(t) = \gamma(t')$ we can remove the part of $\gamma$ between $t$ and $t'$ and still get a continuous path between $x$ and $y$. So without loss of generality we can assume that if $\gamma$ intersects the interior of $E$, then $E$ is contained in $\gamma$. Using the same argument as previously we can also assume that if $\gamma$ goes across $E$, it only does it one time, i.e. $\gamma^{-1}(E)$ is reduced to only one interval.
As a consequence, $\gamma$ can be decomposed in a sequence $[x,v_0], E_1, E_2, \cdot, E_k, [v_k,y]$ where $[x,v_0]$ and $[v_k,y]$ are pieces of edges containing $x$ and $y$ respectively and $E_1 = [v_0,v_1], E_2=[v_1,v_2] \cdot, E_k = [v_{k-1},v_k]$ are pairwise distinct edges of $G$ contained in $B$. 
It follows from Lemma \ref{lemma:edge-distance-root} that the lengths of the edges $E_1, \cdots E_k$ and of $[x,v_0]$ and $[v_k,y]$ are upper bounded by $4\alpha$. As a consequence the length of $\gamma$ is upper bounded by $4(k+2)\alpha$ which is itself upper bounded by $4(N_E(4\alpha)+2)\alpha$ since the edges $E_1, \cdots E_k$ are pairwise distinct. It follows that $d_G(x,y) \leq 4(N_E(4\alpha)+2)\alpha$.
\end{proof}

\begin{theorem} \label{diameter-connected-comp-reeb}
Let $(G, d_G)$ be a connected finite metric graph and let $(X,d_X)$ be a compact geodesic metric space such that $d_{GH}(X,G) < \e$ for some $\e >0$. Let $x_0 \in X$ be a fixed point and let $d_{x_0} = d_X(x_0,.) : X \to [0,+\infty)$ be the distance function to $x_0$.
Then for $d \geq \alpha \geq 0$ the diameter of any connected component $L$ of $d_{x_0}^{-1}([d-\alpha,d+\alpha])$ satisfies
\[
diam(L) \leq 4 (2+N_E(4 (\alpha+2\e))) (\alpha+2\e) + \e
\]
where $N_E(4 (\alpha+2\e))$ is the number of edges of $G$ of length at most $4 (\alpha+2\e)$. 
In particular if $\alpha = 0$ and $8\e$ is smaller that the length of the shortest edge of $G$ then the diameter of $L$ is upper bounded by $17\e$.
\end{theorem}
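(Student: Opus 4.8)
The plan is to move the problem from $X$ to the finite metric graph $G$ through the correspondence supplied by $d_{GH}(X,G)<\e$, reduce it to a statement about the diameter of a band in $G$, and invoke Lemma~\ref{proposition:band-diameter}. Fix a correspondence $C\subset X\times G$ whose distortion is controlled by $d_{GH}(X,G)<\e$, and pick $r\in G$ with $(x_0,r)\in C$. For each $z\in X$ choose a partner $g_z\in G$ with $(z,g_z)\in C$; the correspondence bound gives $|d_G(r,g_z)-d_X(x_0,z)|<\e$, so for $z\in L$, i.e. $d_X(x_0,z)\in[d-\alpha,d+\alpha]$, we get $d_G(r,g_z)\in(d-\alpha-\e,\,d+\alpha+\e)$: the whole ``shadow'' of $L$ in $G$ sits in a band of radius $\alpha+\e$ about the level $d$. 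The theorem will follow once we show that for $x,y\in L$ the partners $g_x,g_y$ lie in a single connected component $B$ of a band $B_{d,\alpha+2\e}$ (in the notation of Lemma~\ref{proposition:band-diameter}): then $d_X(x,y)<d_G(g_x,g_y)+\e\le diam(B)+\e\le 4(2+N_E(4(\alpha+2\e)))(\alpha+2\e)+\e$.

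The subtlety on the $X$ side is that $L$ is merely a connected component of the closed set $d_{x_0}^{-1}([d-\alpha,d+\alpha])$ in a space that is geodesic but not assumed locally nice, so $L$ need not be path connected and we cannot simply push a path forward. We use instead that $L$ is connected, hence \emph{well-chained}: for every $\delta>0$ and every $x,y\in L$ there is a finite sequence $x=z_0,z_1,\dots,z_m=y$ in $L$ with $d_X(z_i,z_{i+1})<\delta$. Transporting this chain by $C$ yields points $g_{z_0},\dots,g_{z_m}$ of $G$, all in the band of radius $\alpha+\e$, with $d_G(g_{z_i},g_{z_{i+1}})<\delta+\e$. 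Joining consecutive partners by a geodesic $\sigma_i$ of $G$ and using that $d_G(r,\cdot)$ is $1$-Lipschitz, every point of $\sigma_i$ is within $\delta+\e$ of $g_{z_i}$ and hence its distance to $r$ stays within $\delta+\e$ of $d_G(r,g_{z_i})$; thus $\sigma_i\subset B_{d,\alpha+2\e+\delta}$. Concatenating the $\sigma_i$ produces a path from $g_x$ to $g_y$ contained in one connected component $B_\delta$ of $B_{d,\alpha+2\e+\delta}$.

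Applying Lemma~\ref{proposition:band-diameter} with its parameter taken to be $\alpha+2\e+\delta$ gives $d_G(g_x,g_y)\le diam(B_\delta)\le 4(2+N_E(4(\alpha+2\e+\delta)))(\alpha+2\e+\delta)$. Since $G$ is a \emph{finite} graph it has only finitely many distinct edge lengths, so $\beta\mapsto N_E(\beta)$ is a step function that is constant on some right neighbourhood of $4(\alpha+2\e)$; letting $\delta\to0^+$ therefore yields $d_G(g_x,g_y)\le 4(2+N_E(4(\alpha+2\e)))(\alpha+2\e)$, and then $d_X(x,y)<d_G(g_x,g_y)+\e$ gives the stated bound on $diam(L)$. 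The last assertion is immediate: for $\alpha=0$, if $8\e$ is smaller than the shortest edge of $G$ then $N_E(8\e)=0$ and the bound collapses to $4\cdot2\cdot2\e+\e=17\e$.

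I expect the delicate point to be the control of the band width under transport: one must verify that the geodesics $\sigma_i$ filling in the transported chain do not escape $B_{d,\alpha+2\e+\delta}$ — this is exactly where the $1$-Lipschitz property of $d_G(r,\cdot)$ and the length bound $<\delta+\e$ on each $\sigma_i$ enter — together with careful bookkeeping of how much slack each use of the correspondence costs, so that the final radius is $\alpha+2\e$ and the additive term is $\e$ and not something larger. The passage $\delta\to0$ is harmless only because $G$ is finite; for a general metric graph one would instead have to argue directly that $g_x$ and $g_y$ belong to a component of the closed band $B_{d,\alpha+2\e}$ itself.
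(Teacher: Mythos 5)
Your proposal is correct and follows essentially the same route as the paper: transport a fine chain of points of $L$ through the correspondence, fill the gaps between consecutive image points with geodesics of $G$, use the $1$-Lipschitz function $d_G(r,\cdot)$ to confine the resulting path to a band of radius about $\alpha+2\e$, apply Lemma~\ref{proposition:band-diameter}, and add $\e$ to come back to $X$. The only differences are bookkeeping: you justify the chaining step by the (correct) observation that a connected set is well-chained, where the paper simply asserts a continuous path inside $L$, and you remove the extra $\delta$ by letting $\delta\to 0$ using the finiteness of $G$, whereas the paper absorbs this slack by taking an $\e'$-correspondence with $\e'<\e$ and chains of mesh $<\e-\e'$.
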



\begin{proof}
Let $\e' >0$ be such that $d_{GH}(X,G) < \e' < \e$.
Let $C \subset X \times G$ be an $\e'$-correspondence between $X$ and $G$ and $(x_0,r) \in C$. we denote by $d_r = d_G(r,.) : G \to [0,+\infty)$ the distance function to $r$ in $G$. 
Let $x_a, x_b \in L$ and let $(x_a,y_a), (x_b,y_b) \in C$. There exists a continuous path $\gamma \subseteq L$ joining $x_a$ to $x_b$.
Since $C$ is an $\e'$-correspondence for any $x \in \gamma$ there exists a point $(x,y) \in C$ such that $d - \alpha -\e' \leq d_r(y) \leq d+ \alpha + \e'$. The set of points $y$ obtained in this way is not necessarily a continuous path from $y_a$ to $y_b$. However one can consider a finite sequence $x_1 = x_a, x_2, \cdots, x_n = x_b$ of points in $\gamma$ such that for any $i=1,\cdots n-1$ we have $d_X(x_i,x_{i+1}) < \e -\e'$. If $(x_i,y_i) \in C$ then we have $d_G(y_i, y_{i+1}) < \e - \e' + \e' = \e$. As a consequence, since $d- \alpha -\e < d- \alpha - \e' < d_r(y_i) < d + \alpha + \e' < d+ \alpha + \e$ the shortest geodesic connecting $y_i$ to $y_{i+1}$ in $G$ remains in the set $d_r^{-1}([d-\alpha-2\e,d+\alpha+2\e])$ and connecting these geodesics for all $i=1,\cdots,n-1$ we get a continuous path from $y_a$ to $y_b$ in $d_r^{-1}([d-\alpha-2\e,d+\alpha+2\e])$. 
It then follows from Proposition \ref{proposition:band-diameter} that $d_G(y_a,y_b) \leq \leq 4 (2+N_E(4 (\alpha+2\e))) (\alpha+2\e)$ and since $C$ is an $\e'$-correspondence (and so an $\e$-correspondence), $d_X(x_a,x_b) < 4 (2+N_E(4 (\alpha+2\e))) (\alpha+2\e) + \e$.

\end{proof}

As a corollary of the Theorem \ref{diameter-connected-comp-reeb} and Theorem \ref{thm:bound-dGH-betti1-M} we immediately obtain the following results for the Reeb graph and the $\alpha$-Reeb graphs.

\begin{theorem} \label{thm:bound-dGH-final}
Let $(X, d_X)$ be a compact connected path metric space, let $r \in X$ be a fixed base point such that the metric Reeb graph $(G,d_G)$ of the function $d = d_X(r,.): X \rightarrow \mathbb{R}$ is a finite graph.
If for a given $\e >0$ there exists a finite metric graph $(G',d_{G'})$ such that $d_{GH}(X,G') < \e$ then we have 
\[
d_{GH}(X, G) < (\beta_1(G) + 1)  (17 + 8 N_{E,G'}(8 \e)) \e
\] 
where $N_{E,G'}(8\e)$ is the number of edges of $G'$ of length at most $8\e$. 
In particular if $X$ is at distance less than $\e$ from a metric graph with shortest edge length larger than $8\e$ then $d_{GH}(X,G) < 17  (\beta_1(G) + 1) \e$.
\end{theorem}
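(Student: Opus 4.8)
The plan is to combine the two main ingredients already established in the excerpt: Theorem~\ref{thm:bound-dGH-betti1-M}, which bounds $d_{GH}(X,G)$ by $(\beta_1(G)+1)M$ where $M$ is the supremum of the diameters of the level sets of $\pi$, and Theorem~\ref{diameter-connected-comp-reeb}, which bounds the diameter of a connected component of a level (or near-level) set of a distance function on $X$ in terms of the edge-count function of a nearby metric graph $G'$. The first step is to observe that a level set $\pi^{-1}(p)$ of the Reeb (or $\alpha$-Reeb) quotient map $\pi : X \to G$ is contained in a connected component of $d^{-1}(d_*(p))$, i.e.\ a connected component of $d_{r}^{-1}([d_*(p)-\alpha, d_*(p)+\alpha])$ with $\alpha = 0$ (for the ordinary Reeb graph) or $\alpha$ equal to the width of the covering intervals (for the $\alpha$-Reeb graph; in the statement we take $\alpha=0$, i.e.\ the Reeb graph). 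Then the diameter of $\pi^{-1}(p)$ in $X$ is at most the diameter of that connected component, which is exactly the quantity bounded in Theorem~\ref{diameter-connected-comp-reeb}.

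Concretely, I would apply Theorem~\ref{diameter-connected-comp-reeb} with the given nearby metric graph $G'$ playing the role of ``$G$'' in that theorem, with $x_0 = r$, and with $\alpha = 0$. Since $d_{GH}(X,G') < \e$, the hypothesis is satisfied, and the theorem gives that any connected component $L$ of $d^{-1}(\{t\}) = d_r^{-1}([t,t])$ has $\operatorname{diam}(L) \le 4(2 + N_{E,G'}(8\e))(2\e) + \e = (16 + 8 N_{E,G'}(8\e))\e + \e = (17 + 8 N_{E,G'}(8\e))\e$. Every level set $\pi^{-1}(p)$ is contained in such a component $L$ (it is a path-connected — hence connected — subset of $d^{-1}(d_*(p))$), so $M = \sup_{p \in G} \operatorname{diam}(\pi^{-1}(p)) \le (17 + 8 N_{E,G'}(8\e))\e$. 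Plugging this into Theorem~\ref{thm:bound-dGH-betti1-M} yields
\[
d_{GH}(X,G) < (\beta_1(G)+1) M \le (\beta_1(G)+1)(17 + 8 N_{E,G'}(8\e))\e,
\]
which is exactly the claimed bound. For the ``in particular'' clause, if $X$ is within $\e$ of a metric graph whose shortest edge exceeds $8\e$, then $N_{E,G'}(8\e) = 0$, so the bound collapses to $17(\beta_1(G)+1)\e$ — equivalently, this is the ``$\alpha=0$, $8\e$ smaller than the shortest edge'' special case of Theorem~\ref{diameter-connected-comp-reeb} giving $\operatorname{diam}(L) \le 17\e$, hence $M \le 17\e$.

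The only genuinely non-routine point — and the step I would flag as needing a careful sentence — is the claim that each Reeb/$\alpha$-Reeb level set $\pi^{-1}(p)$ sits inside a single connected component of the appropriate sublevel-band of $d$ on $X$. For the ordinary Reeb graph this is essentially the definition of the equivalence relation ($x \sim y$ iff $d(x)=d(y)$ and $x,y$ lie in the same path component of $d^{-1}(d(x))$), so $\pi^{-1}(p)$ is by construction contained in — in fact equal to — a path component of a level set, and Theorem~\ref{diameter-connected-comp-reeb} with $\alpha=0$ applies directly. For the $\alpha$-Reeb graph one must instead note that $\pi^{-1}(p)$, being built from path components of $d^{-1}(I_i)$ for intervals $I_i$ of length at most $\alpha$, is contained in a connected component of $d^{-1}(I_i) \subseteq d_r^{-1}([t-\alpha, t+\alpha])$ for a suitable $t$, and then invoke Theorem~\ref{diameter-connected-comp-reeb} with that $\alpha$; the statement as written handles only the Reeb case ($\alpha=0$), so no extra care is needed there, but this is the spot where the analogue for $\alpha$-Reeb graphs (Theorem~\ref{thm:bound-dGH-final-alpha}) would diverge. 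Everything else is arithmetic and a direct citation of the two previously proved theorems.
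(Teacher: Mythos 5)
Your proposal is correct and matches the paper's argument: the paper derives Theorem \ref{thm:bound-dGH-final} exactly as an immediate corollary of Theorem \ref{thm:bound-dGH-betti1-M} and Theorem \ref{diameter-connected-comp-reeb}, using the fact that each Reeb fiber $\pi^{-1}(p)$ is a (path) connected subset of a level set of $d$, so $M \leq 4(2+N_{E,G'}(8\e))(2\e)+\e = (17+8N_{E,G'}(8\e))\e$ with $\alpha=0$. Your arithmetic and the handling of the ``in particular'' case ($N_{E,G'}(8\e)=0$) are also as intended.
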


\begin{theorem} \label{thm:bound-dGH-final-alpha}
Let $(X, d_X)$ be a compact connected path metric space. Let $r \in X$, $\alpha >0$ and $\mathcal{I}$ be a finite covering of the segment $[0,\text{Diam}(X)]$ by open intervals of length at most $\alpha$ such that the $\alpha$-Reeb graph $G_\alpha$ associated to $\mathcal{I}$ and the function $d = d_X(r,.): X \rightarrow \mathbb{R}$ is a finite graph.
If for a given $\e >0$ there exists a finite metric graph $(G',d_{G'})$ such that $d_{GH}(X,G') < \e$ then we have 
\[
d_{GH}(X, G_\alpha) < (\beta_1(G_\alpha) + 1)  ( 4 (2+N_{E,G'}(4 (\alpha+2\e))) (\alpha+2\e) + \e )
\] 
where $N_{E,G'}(4 (\alpha+2\e))$ is the number of edges of $G'$ of length at most $4 (\alpha+2\e)$. 
In particular if $X$ is at distance less than $\e$ from a metric graph with shortest edge length larger than $4 (\alpha+2\e)$ then $d_{GH}(X,G_\alpha) < (\beta_1(G_\alpha) + 1)(8\alpha+17\e)$.
\end{theorem}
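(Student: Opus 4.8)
The plan is to obtain Theorem~\ref{thm:bound-dGH-final-alpha} as an immediate corollary of Theorem~\ref{thm:bound-dGH-betti1-M} and Theorem~\ref{diameter-connected-comp-reeb}, in exactly the same way Theorem~\ref{thm:bound-dGH-final} is derived for the ordinary Reeb graph. Since $X$ is a compact connected path metric space it is geodesic by the Hopf--Rinow theorem, so Theorem~\ref{thm:bound-dGH-betti1-M} applies and already yields
\[
d_{GH}(X,G_\alpha) < (\beta_1(G_\alpha)+1)\, M, \qquad M := \sup_{p \in G_\alpha} \mathrm{diam}(\pi^{-1}(p)).
\]
Thus the entire content of the proof is to bound $M$ using Theorem~\ref{diameter-connected-comp-reeb} applied with $x_0 = r$ and the distance function $d = d_X(r,\cdot)$.

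The key step is to show that every level set $\pi^{-1}(p)$ of the quotient map $\pi : X \to G_\alpha$ is contained in a single connected component of a band of the form $d^{-1}([c-\alpha,c+\alpha])$. Fix $p \in G_\alpha$ and set $c = d_*(p)$. Since ``$d(x) = d(y)$'' is itself an equivalence relation, the transitive closure $\sim_\alpha$ is contained in it, so $\pi^{-1}(p) \subseteq d^{-1}(c)$. Fix one point $w \in \pi^{-1}(p)$ and let $L$ be the connected component of $d^{-1}([c-\alpha,c+\alpha])$ containing $w$ (when $c<\alpha$, replace $c$ by $\alpha$; this is harmless because $d \geq 0$ gives $d^{-1}([c-\alpha,c+\alpha]) = d^{-1}([0,c+\alpha]) \subseteq d^{-1}([\alpha-\alpha,\alpha+\alpha])$, and enlarging the band only enlarges the component). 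For any $y \in \pi^{-1}(p)$, unwinding the definition of the transitive closure gives a finite chain $w = z_0, z_1, \dots, z_m = y$ with $d(z_j) = c$ for all $j$ and, for each $j$, with $z_j$ and $z_{j+1}$ in the same path connected component of $d^{-1}(I_{i_j})$ for some $I_{i_j} \in \mathcal{I}$; as $z_j \in d^{-1}(I_{i_j})$ we have $c \in I_{i_j}$, and since $|I_{i_j}| \leq \alpha$ this forces $I_{i_j} \subseteq (c-\alpha,c+\alpha)$. Hence the path connecting $z_j$ to $z_{j+1}$ stays in $d^{-1}(I_{i_j}) \subseteq d^{-1}([c-\alpha,c+\alpha])$, and the concatenation of these paths joins $w$ to $y$ inside that band. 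Therefore $y \in L$, so $\pi^{-1}(p) \subseteq L$ and $\mathrm{diam}(\pi^{-1}(p)) \leq \mathrm{diam}(L)$.

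It then remains to invoke Theorem~\ref{diameter-connected-comp-reeb} with $x_0 = r$, radius $\alpha$, and center $\max(c,\alpha)$ (which is $\geq \alpha \geq 0$ as required), giving $\mathrm{diam}(L) \leq 4\,(2 + N_{E,G'}(4(\alpha+2\e)))(\alpha+2\e) + \e$. Taking the supremum over $p \in G_\alpha$ yields the same bound on $M$, and substituting into the displayed inequality above produces exactly the claimed estimate $d_{GH}(X,G_\alpha) < (\beta_1(G_\alpha)+1)\,(4(2+N_{E,G'}(4(\alpha+2\e)))(\alpha+2\e)+\e)$. The ``in particular'' statement follows at once: if the shortest edge of $G'$ exceeds $4(\alpha+2\e)$ then $N_{E,G'}(4(\alpha+2\e)) = 0$, so the bound on $M$ collapses to $8(\alpha+2\e) + \e = 8\alpha + 17\e$.

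I expect the only genuine subtlety to be the level-set claim of the second paragraph: one must notice that all the intermediate covering intervals $I_{i_j}$ produced by the transitive closure necessarily contain the common value $c$ (this is exactly what confines the connecting paths to a band of half-width $\alpha$), together with the small boundary bookkeeping when $c < \alpha$ so that the hypothesis $d \geq \alpha \geq 0$ of Theorem~\ref{diameter-connected-comp-reeb} is genuinely met. Everything after that is substitution, just as in the proof of Theorem~\ref{thm:bound-dGH-final}.
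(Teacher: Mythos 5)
Your proposal is correct and follows exactly the route the paper intends: the paper states Theorem \ref{thm:bound-dGH-final-alpha} as an immediate corollary of Theorems \ref{thm:bound-dGH-betti1-M} and \ref{diameter-connected-comp-reeb}, and your argument simply supplies the (implicit) details, namely that each fiber $\pi^{-1}(p)$ lies in $d^{-1}(d_*(p))$ and, via the transitive-closure chain, in a single connected component of a band of half-width $\alpha$, after which the diameter bound and the correspondence bound combine as you describe. The handling of the boundary case $c<\alpha$ and the computation $8(\alpha+2\e)+\e=8\alpha+17\e$ are both fine.
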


\section{Algorithm}  \label{sec:algo-pl}
In this section, we describe an algorithm for computing $\alpha$-Reeb graph for some $\alpha > 0$. 
We assume the input of the algorithm is a neighboring graph $H=(V, E)$, a function  
$l: E \rightarrow \mathbb{R}^+$ specifying the edge length and a parameter $\alpha$. 
In the applications where the input is given
as a set of points together with pairwise distances, i.e., a finite metric space, 
one can generate the neighboring graph $H$ as a Rips graph of the input points with 
the parameter chosen as a fraction of $\alpha$.  

Our algorithm can be described as follows. We assume $H$ is connected 
as one can apply the algorithm to each connected component if $H$ is not . 
Figure~\ref{fig:algorithm} illustrates the different steps of the algorithm.  
In the first step, we fix a node of $H$ as the root $r$ and then obtain the distance function $d: V \rightarrow \mathbb{R}^+$ by
computing $d(v)$ as the graph distance from the node $v$ to $r$. In the second step, 
we apply the Mapper algorithm~\cite{smc-tmah-07} to the nodes $V$ with filter $d$ to construct
a graph $\tilde{G}$. 
Specifically, let $\mathcal{I} = \{(i\alpha, (i+1)\alpha), ((i+0.5)\alpha, (i+1.5)\alpha)| 0\leq i \leq m \}$
so that $\cup_{I_k \in \mathcal{I}} I_k$ covers the range of the function $d$.
We say an interval $I_{k_1}\in \mathcal{I}$ is lower than another interval $I_{k_2}\in \mathcal{I}$ if
the midpoint of $I_{k_1}$ is smaller than that of $I_{k_2}$. 
Now let $H_k$ be the subgraph of $H$ restricted to $V_k = d^{-1}(I_k)$. Namely
two nodes in $H_k$ are connected with an edge if they are in $H$. Notice that 
each subgraph $H_k$ may have several connected components, which can be listed 
in an arbitrary order. Denote $H_k^l$ the $l$-th connected component of $H_k$ and 
$V_k^l$ its set of nodes. Think of $\{V_k^l\}_{k, l}$ as a cover of $V$. Then the graph $\tilde{G}$ 
constructed by the Mapper algorithm is the $1$-skeleton of the nerve of that cover. Namely, each node in $\tilde{G}$ represents
an element in $\{V_k^l\}_{k, l}$, i.e., a subset of nodes in $V$. Two nodes $V_{k_1}^{l_1}$
and $V_{k_2}^{l_2}$ are connected with an edge if $V_{k_1}^{l_1} \cap V_{k_2}^{l_2} \neq \emptyset $. 

In the final step, we represent each node $V_{k}^{l}$ in $\tilde{G}$ using a copy of 
the interval $I_k$. As mentioned in the Section~\ref{sec:reeb-graph}, $\alpha$-Reeb graph is a quotient space of the
disjoint union of those copies of intervals. Specifically, for an edge in $\tilde{G}$, 
let $V_{k_1}^{l_1}$ and $V_{k_2}^{l_2}$ be its endpoints. Then $I_{k_1}$ and $I_{k_2}$ must 
be partially overlapped. We identify the overlap part of these two intervals. After
identifying the overlapped intervals for all edges in $\tilde{G}$, the resulting quotient space 
is the $\alpha$-Reeb graph. Algorithmically, the identification is performed as follows. We split each copy of internal 
$I_k$ into two by adding a point in the middle. Now think of it as a graph with two edges and label one of them upper and 
the other lower.  Notice that two overlapped intervals $I_{k_1}$ and $I_{k_2}$ can not be exactly the same. 
One must be lower than the other. To identify their overlapped part, we identify the upper edge 
of the lower interval with the lower edge of the upper interval.

The time complexity of the above algorithm is dominated by the computation of the distance
function in the first step, which is $O(|E| + |V|\log|V|)$. The computation of 
the connected components in the second step is $O(|V|\log|V|)$ based on union-find data structure.
In the final step, there are at most $O(|V|)$ number of the copies of the intervals. Based 
on union-find data structure, the identification can also be performed in  $O(|V|\log|V|)$ time. 

\begin{figure}[h]
\begin{center}
\begin{tabular}{c}
\includegraphics[width=0.8\textwidth]{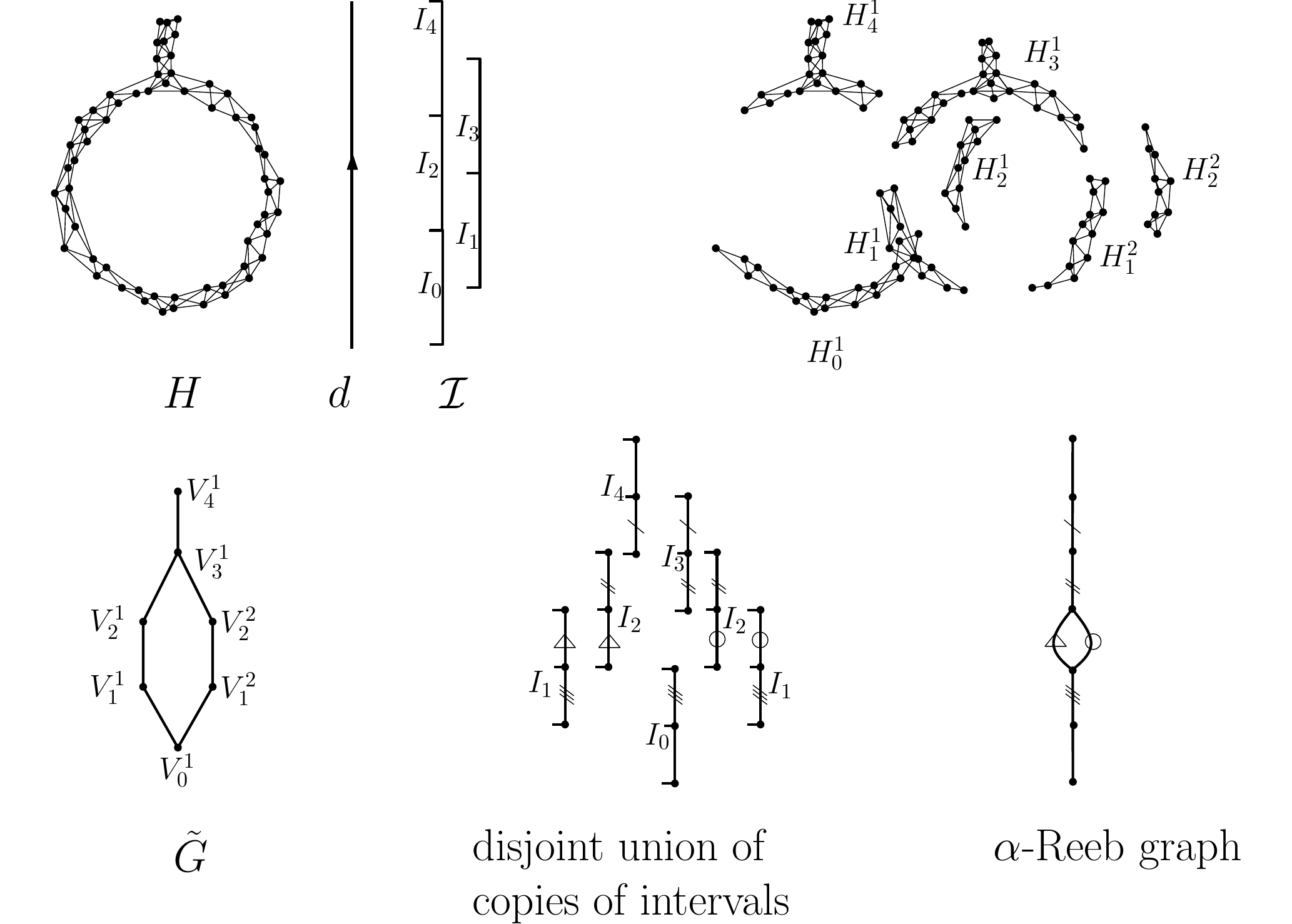} 
\end{tabular}
\caption{
Illustration of the different steps of the algorithm for computing $\alpha$-Reeb graph. 
In the disjoint union of copies of intervals, the subintervals marked with same labels are identified in the $\alpha$-Reeb graph. }
\label{fig:algorithm}
\end{center}
\end{figure}


\section{Experiments}  \label{sec:exp}
In this section, we apply our algorithm to a few data sets. 
The first data set is that of earthquake locations through which we
wish to learn the geometric information about earthquake faults. 
The raw data was obtained from USGS Earthquake Search~\cite{earthquake} and 
consists of earthquakes between 01/01/1970 and 01/01/2010, of magnitude 
greater than $5.0$, and of location in the rectangular area between 
latitudes -75 degrees and 75 degrees and longitude between -170 degrees
and 10 degrees. The raw
earthquake data set contains the coordinates of the epicenters
of 12790 earthquakes. We follow the procedure described in ~\cite{MRIDUL-metric-graph} to 
remove outliers and randomly sampled 1600 landmarks. Finally, we computed 
a neighboring graph from these landmarks with parameter $4$. The 
length of an edge in this graph is the Euclidean distance between its endpoints. 
For each connected component, we fix a root point and compute the graph 
distance function $d$ to the root point as shown in Figure~\ref{fig:earthquake}(a).
Set $\alpha$  also equals $4$ and apply our algorithm to the above data to obtain 
the $\alpha$-Reeb graph. 
In general $\alpha$-Reeb graph is an abstract metric graph. 
In this example, for the purpose of visualization, we use the coordinates 
of the landmarks to embed the graph into the plane as follows. Recall that 
for a copy of interval $I_k$ representing the node $V_k^l$ in $\tilde{G}$, 
we split it into two by adding a point in the middle. We embed the 
endpoints of the interval to the landmarks of the minimum and the maximum 
of the funciton $d$ in $V_k^l$, and the point in 
the middle to the landmark of the median of the function $d$ in $V_k^l$. 
Figure~\ref{fig:earthquake}(b) shows the embedding of the $\alpha$-Reeb graph. 
Note this embedding may introduce metric distortion, i.e., the Euclidean
length of the edge may not reflect the length of the corresponding edge in the $\alpha$-Reeb graph. 

\begin{figure}[h]
\begin{center}
\begin{tabular}{cc}
\includegraphics[width=0.45\textwidth]{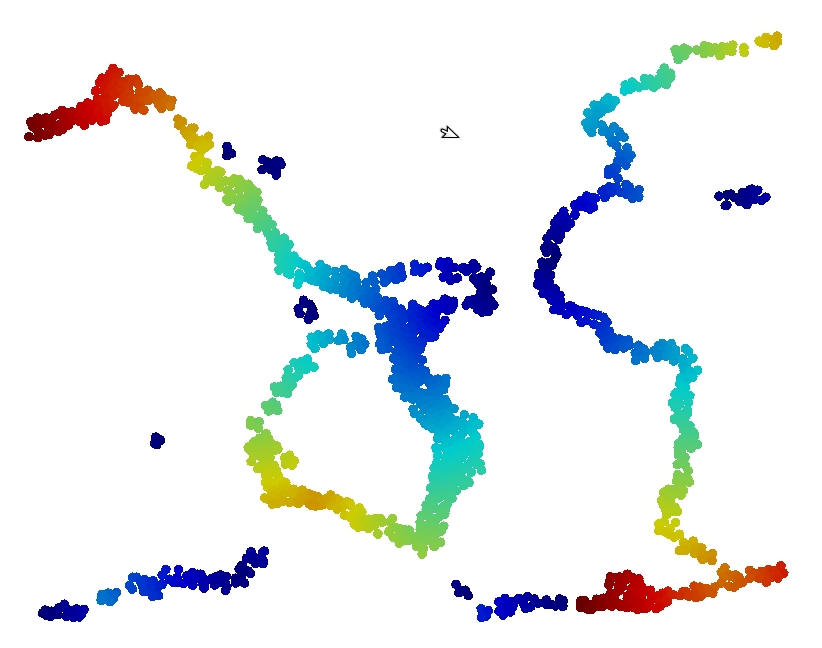} & \includegraphics[width=0.45\textwidth]{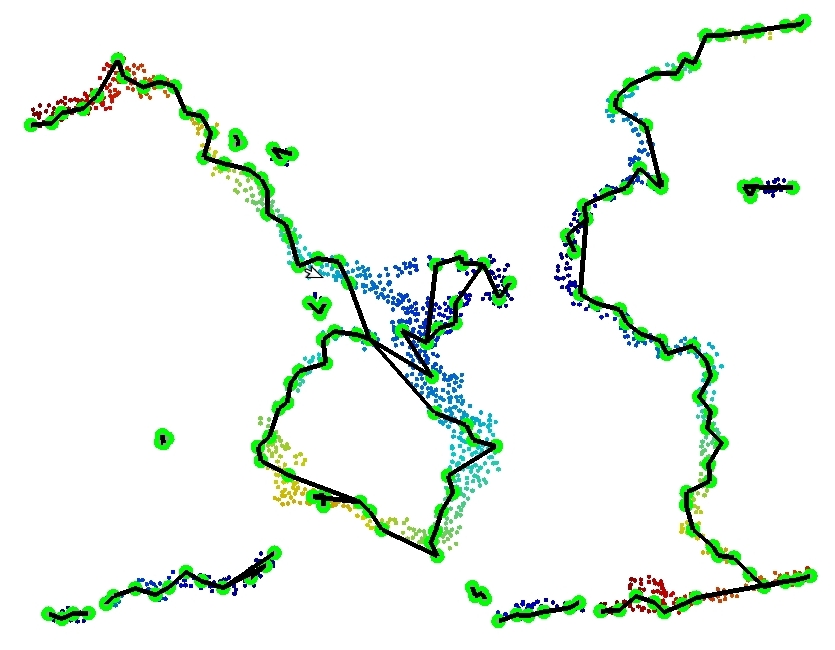}\\
(a) & (b)
\end{tabular}
\caption{(a) The distance functions $d$ on each connected components. The value increases from cold to warm colors. 
(b) The reconstructed $\alpha$-Reeb graph. }
\label{fig:earthquake}
\end{center}
\end{figure}

The second data set is that of 500 GPS traces tagged ``Moscow'' 
from OpenStreetMap~\cite{streetmap}. Since cars move on roads, we expect the
locations of cars to provide information about the metric
graph structure of the Moscow road network. We first selected a metric 
$\epsilon$-net on the raw GPS locations
with $\epsilon = 0.0001$ using furthest point sampling. Then, we
computed a neighboring graph from the samples with parameter $0.0004$.
Again for each connected component, we fix a root point and compute the graph 
distance function $d$ to the root point as shown in Figure~\ref{fig:gps-moscow}(a).
Set $\alpha$ also equals $0.0004$ and compute the $\alpha$-Reeb graph. 
Again, we use the same method as above to embed the $\alpha$-Reeb graph into
the plane, as shown in Figure~\ref{fig:gps-moscow}(b). 

\begin{figure}[h]
\begin{center}
\begin{tabular}{cc}
\includegraphics[width=0.45\textwidth]{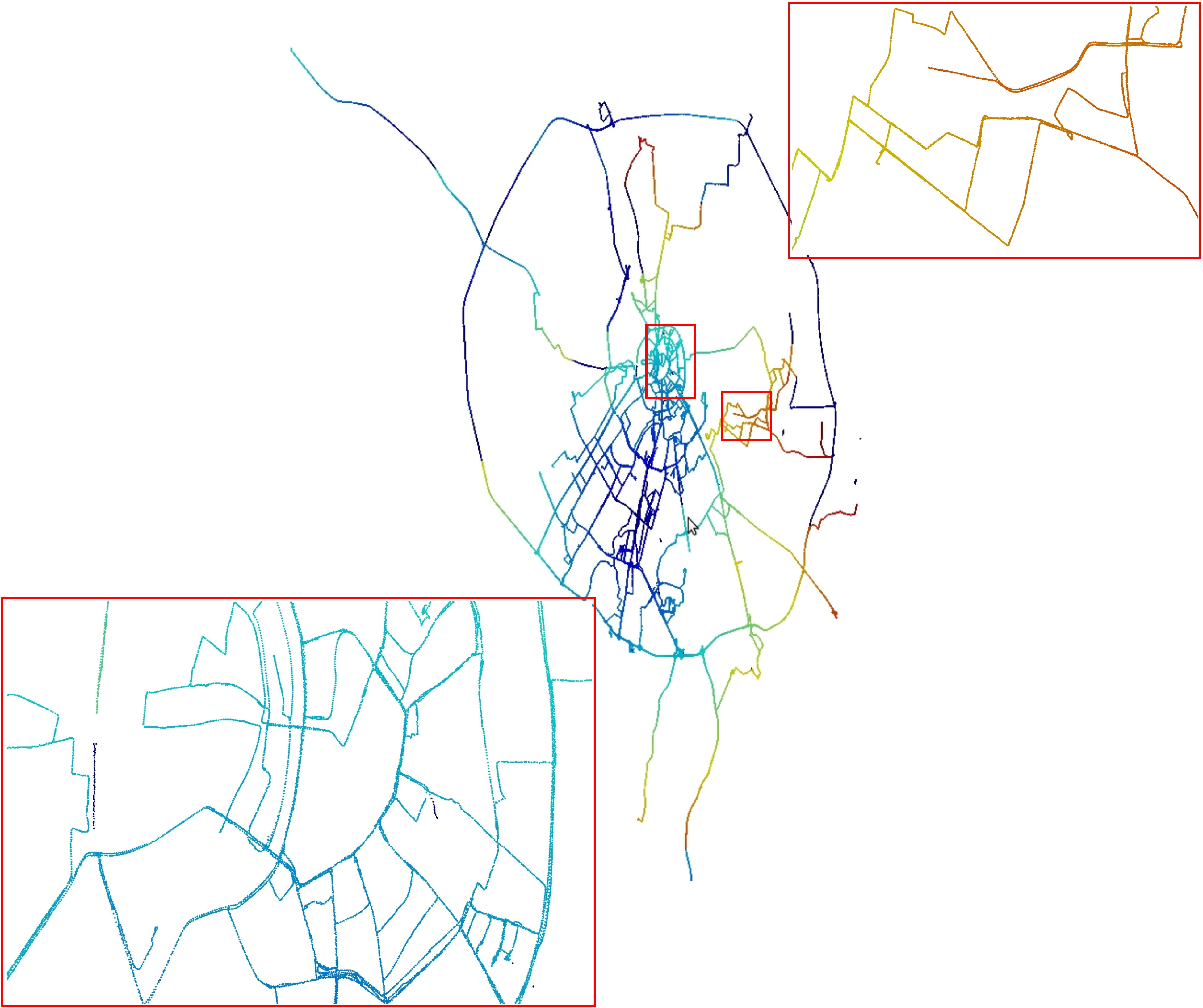} & \includegraphics[width=0.45\textwidth]{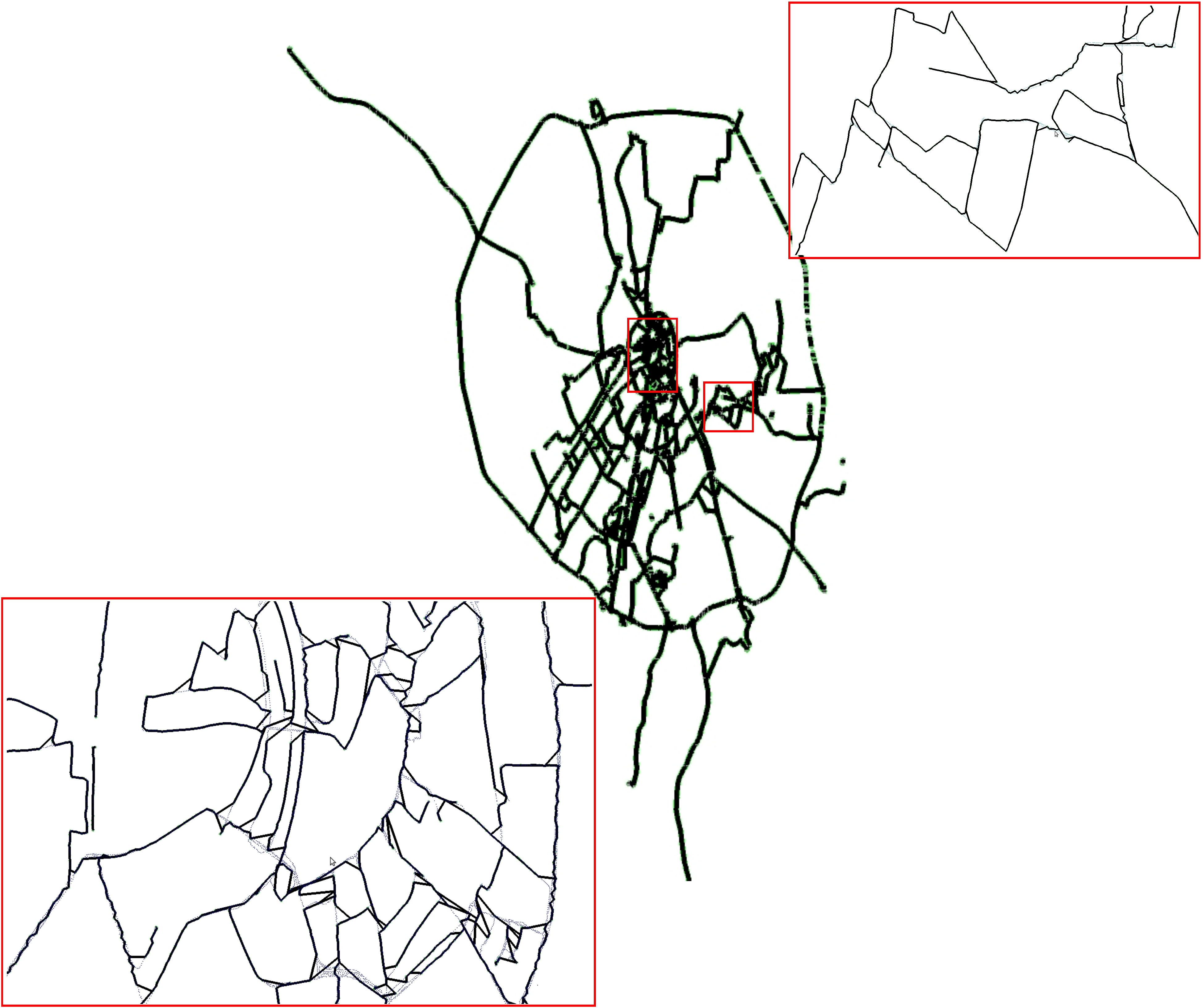} \\
(a) & (b)
\end{tabular}
\caption{(a) The distance functions $d$ on each connected components. The value increases from cold to warm colors.
(b) The reconstructed $\alpha$-Reeb graph. }
\label{fig:gps-moscow}
\end{center}
\end{figure}

To evaluate the quality of our $\alpha$-Reeb graph for each data set, 
we computed both original pairwise distances, and pairwise 
distances approximated from the constructed $\alpha$-Reeb graph. 
For GPS traces, we randomly select 100 points as the data set is too big
to compute all pairwise distances. We also evaluated the use of 
$\alpha$-Reeb graph to speed up distance computations by showing reductions 
in computation time. Only pairs of points in the same connected
component are included because we obtain zero error
for the pairs of vertices that are not. 
Statistics for the size of the reconstructed graph, error of
approximate distances, and reduction in computation time
are given in Table~\ref{tbl:statistics}. 

\begin{table*}[!h]
\begin{center}
\begin{tabular}{ ccc }
\hline
  &   GPS traces  & Earthquake \\
\hline
\#Original points       & 82541 & 1600 \\
\#Original edges        &  313415 & 26996 \\
\#Nodes in $\alpha$-Reeb graph     & 21644 &  147 \\
\#Edges in $\alpha$-Reeb graph     & 21554 &  137 \\
\hline
Graph reconstruction time  & 46.8 & 0.32\\
Original Dist Comp Time & 15.27 & 1.12 \\
Approx Dist Comp Time & 0.96 & 0.01 \\
\hline
Mean distortion       & 6.5\% & 14.1\%\\
Median distortion    & 5.3\% & 12.5\%\\
 \hline
\end{tabular}
\end{center}
\vspace*{-0.15in}
\caption{The graph reconstruction time is the total time
of computing distance function and reconstructing the graph. 
The original distance computation time shows the total
time of computing these distances using the original graph. 
The approximate distance computation time is the total time to compute approximate 
distances based on the reconstructed $\alpha$-Reeb graph. All times are in seconds. \label{tbl:statistics}}
\end{table*}

Road network is directional. There are one-way streets. 
In fact, roads are often split so that cars in different directions
run in different lanes. In particular, this is the true for highways. 
In addition, when two roads cross in GPS coordinates, they may bypass
through a tunnel or an evaluated bridge and thus the road network itself
may not cross. 
Since in most circumstances, drivers follow the road network and 
do not drive against the traffic, such directional information is contained 
in the GPS traces. Here we encode the directional information by stacking several 
consecutive GPS coordinates to form a point in a higher dimensional space. 
In this way, we obtain a new set of points in this higher dimension space.  
Then we build a neighboring graph for this new set of points 
based on $L_2$ norm and apply our algorithm to recover the road network. 
In particular, although the paths intersect at the cross in GPS cooridnates, 
the roadnet work does not. 

We first test the above strategy on a synthetic dataset, 
where we simulate a car driving on a highway crossing according to the right-driving rule
as shown in Figure~\ref{fig:cross-synthetic}(a). 300 hundred traces are obtained, each of which 
contains 500 points. Stack 10 consecutive points along a trace
to form a point in $20$-dimensional space. In this way, obtain a point set in 
$20$-dimensional space. Build a neighboring graph over this set of points
where the length of an edge is the Euclidean distance of its endpoints, and 
compute the distance function as the graph distance to an arbitrarily chosen point 
as shown in Figure~\ref{fig:cross-synthetic}(b) where the points are projected to
three space using the 1st, 2nd and 17th coordinates. To visualize the reconstructed
$\alpha$-Reeb graph, we choose the landmarks in the same way as the previous examples 
and then project them using the above three coordinates. Figure~\ref{fig:cross-synthetic}(c) and 
(d) show the reconstructed graph in two viewpoints. As we can see, we recover the road network.

\begin{figure}[ht]
\begin{center}
\begin{tabular}{cc}
\includegraphics[width=0.4\textwidth]{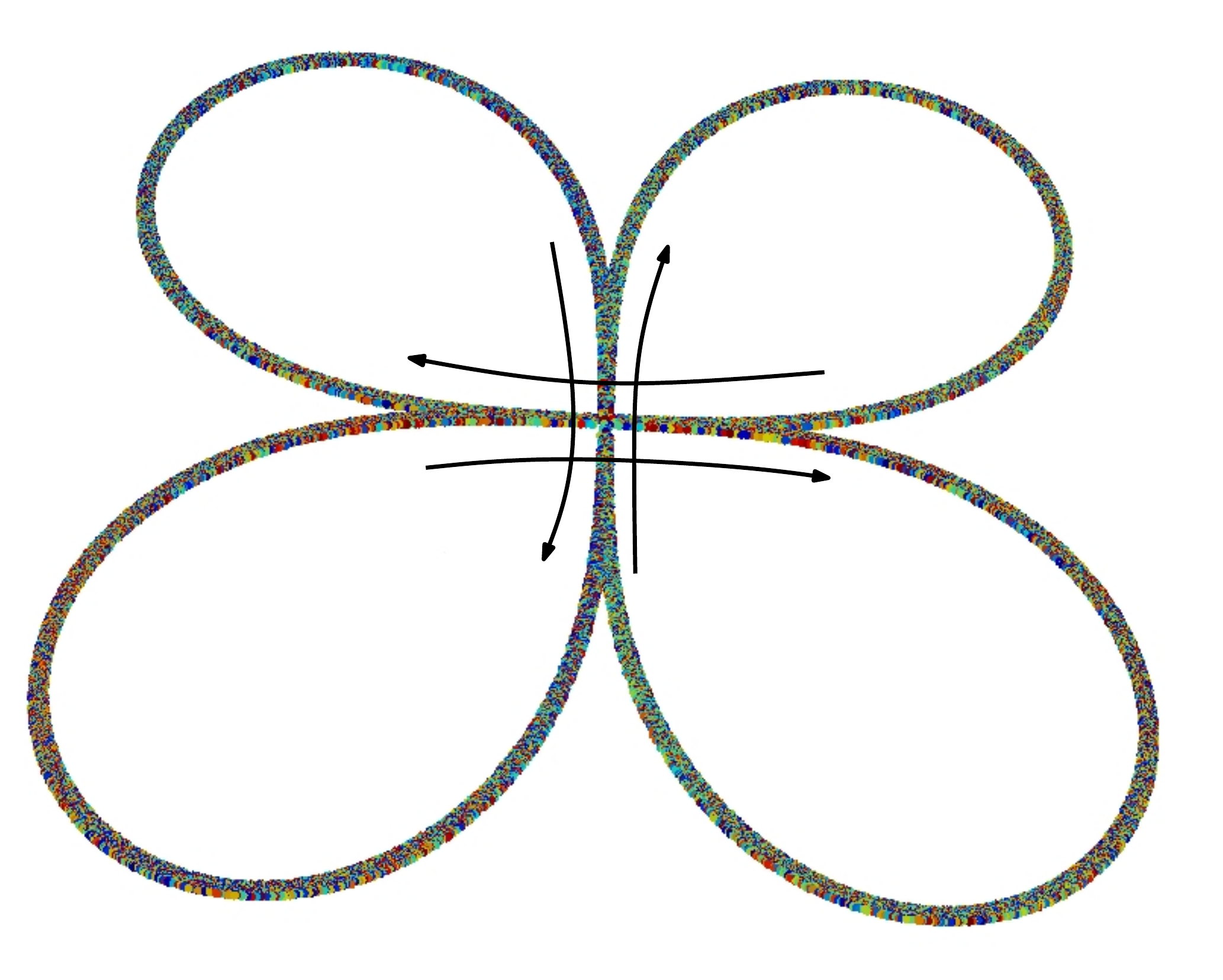} & 
\includegraphics[width=0.4\textwidth]{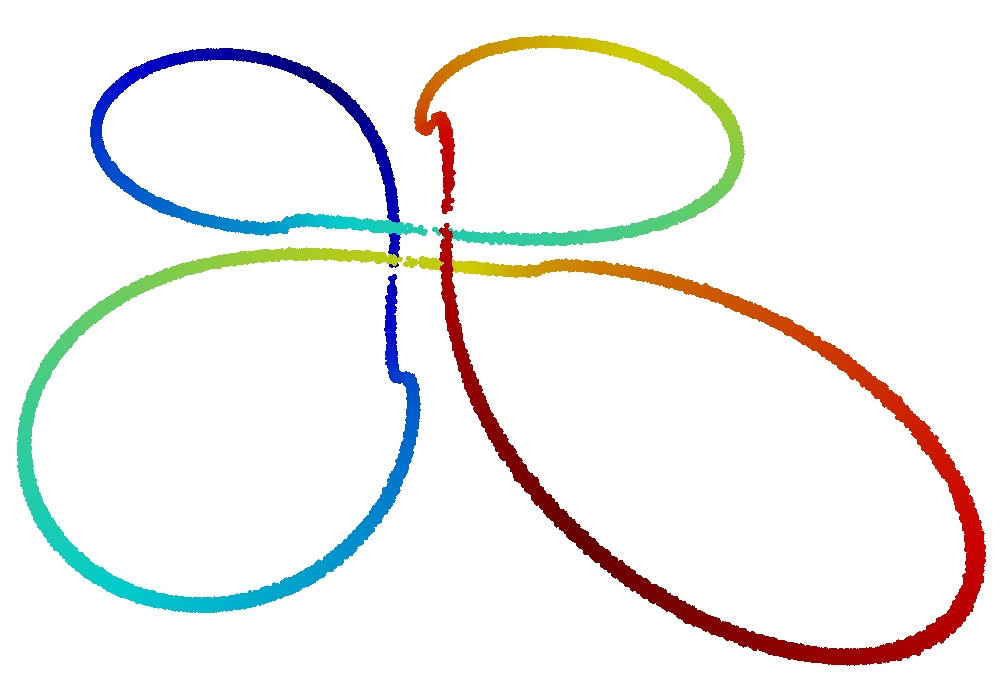} \\
(a) & (b) \\
\includegraphics[width=0.4\textwidth]{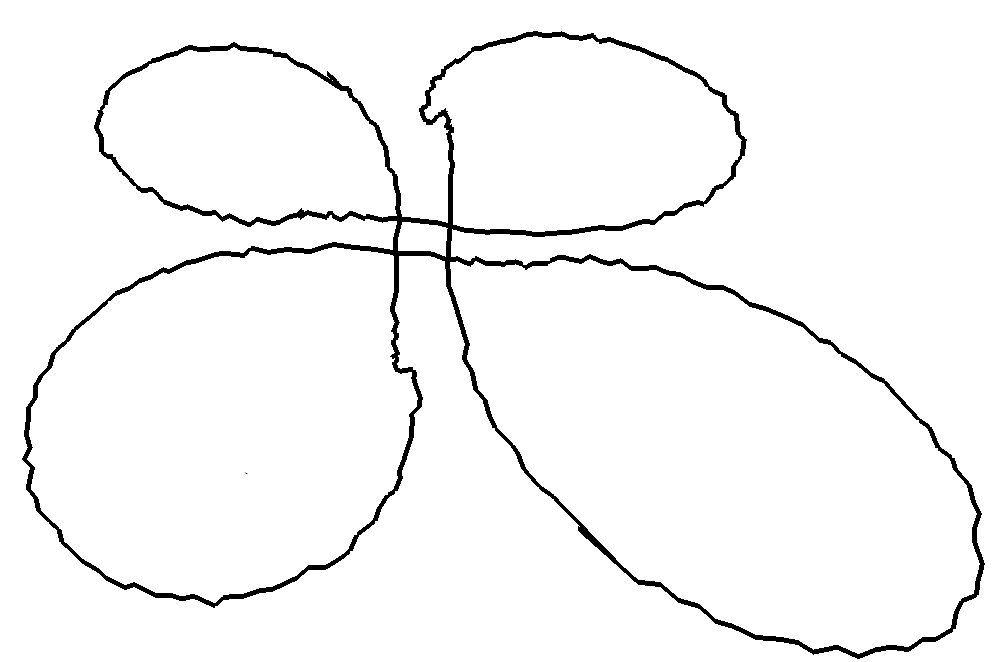}&
\includegraphics[width=0.4\textwidth]{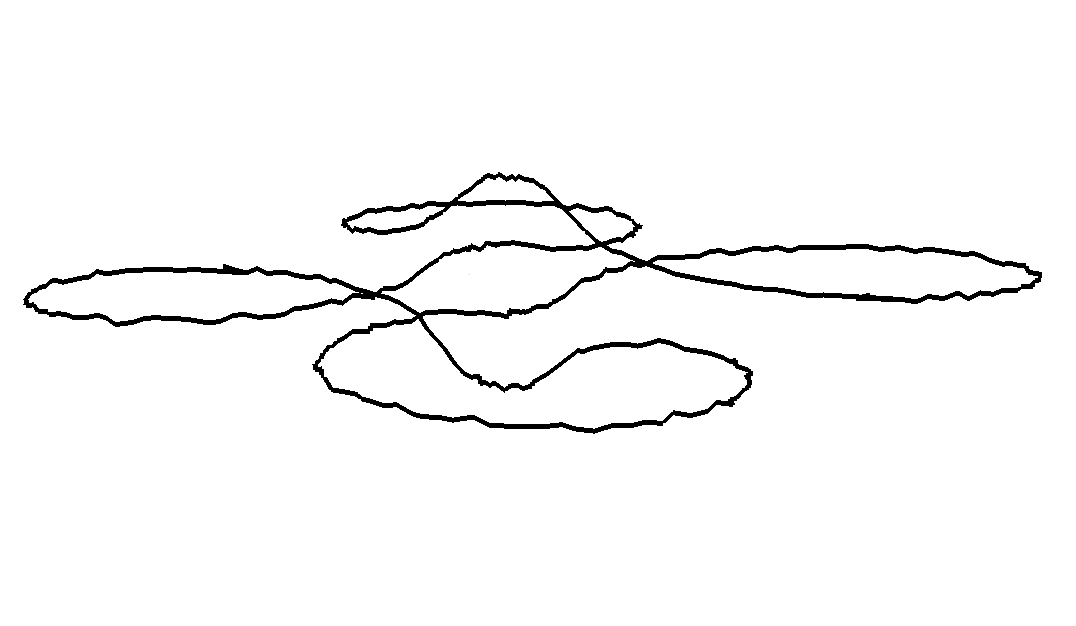}\\
(c) & (d) 
\end{tabular}
\caption{(a) A highway crossing and the synthetic traces. (b) The distance function. 
(c) and (d) The reconstructed $\alpha$-Reeb graph viewed from two perspectives. }
\label{fig:cross-synthetic}
\end{center}
\end{figure}

Next we extract those GPS traces from the above ``Moscow'' dataset which pass through 
a highway crossing as shown in Figure~\ref{fig:cross-moscow}(a). Since GPS records the position based
on time, we resample the traces so that the distances between any two consecutive 
samples is the same among all traces. Then we apply the above algorithm to
the resampled traces. Figure~\ref{fig:cross-moscow}(c) and (d) show the reconstructed graph
which recovers the road network of this highway crossing.

\begin{figure}[ht]
\begin{center}
\begin{tabular}{cc}
\includegraphics[width=0.4\textwidth]{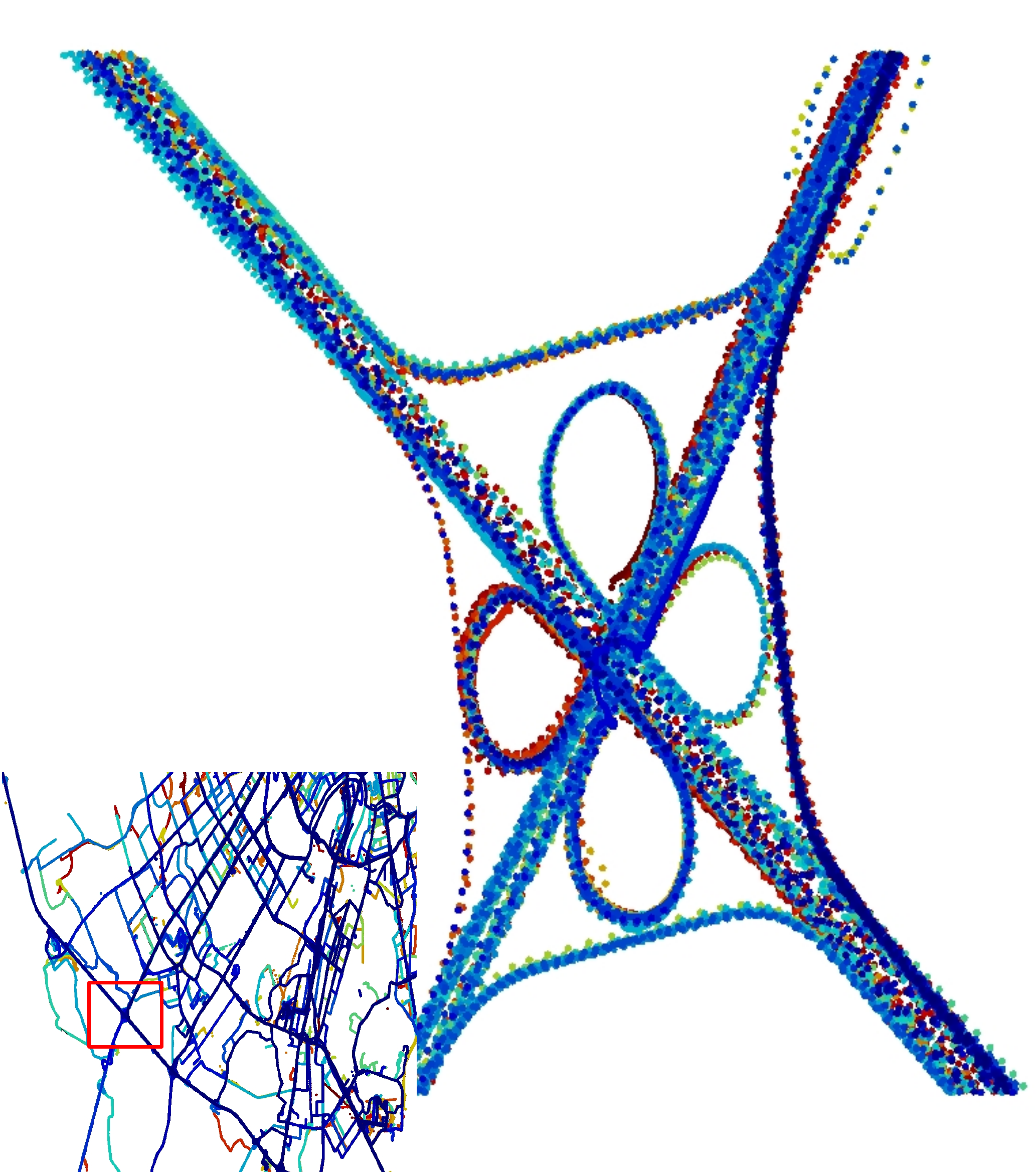} &
\includegraphics[width=0.4\textwidth]{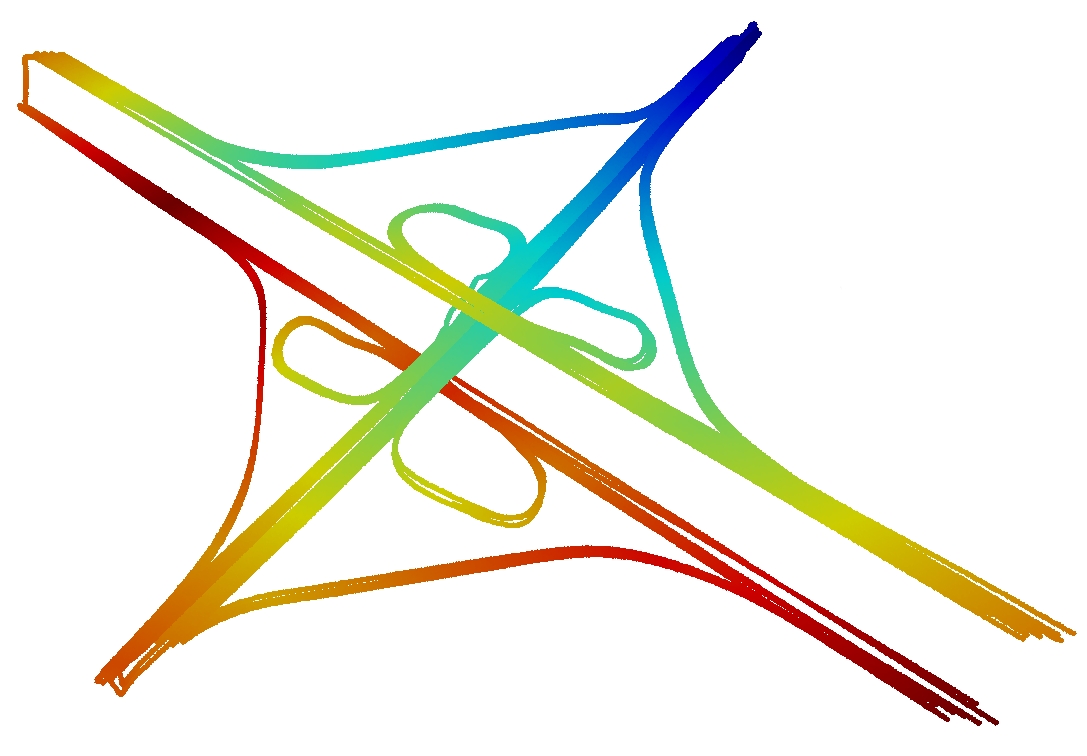} \\
(a) & (b)\\
\includegraphics[width=0.4\textwidth]{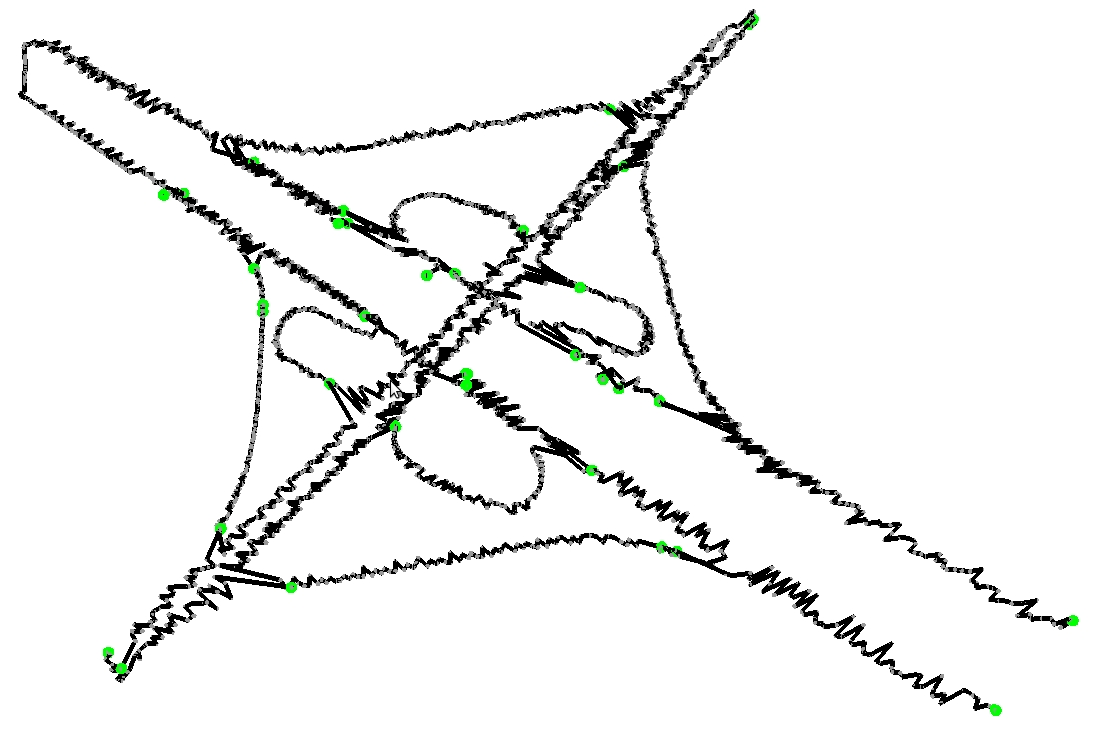}&
\includegraphics[width=0.4\textwidth]{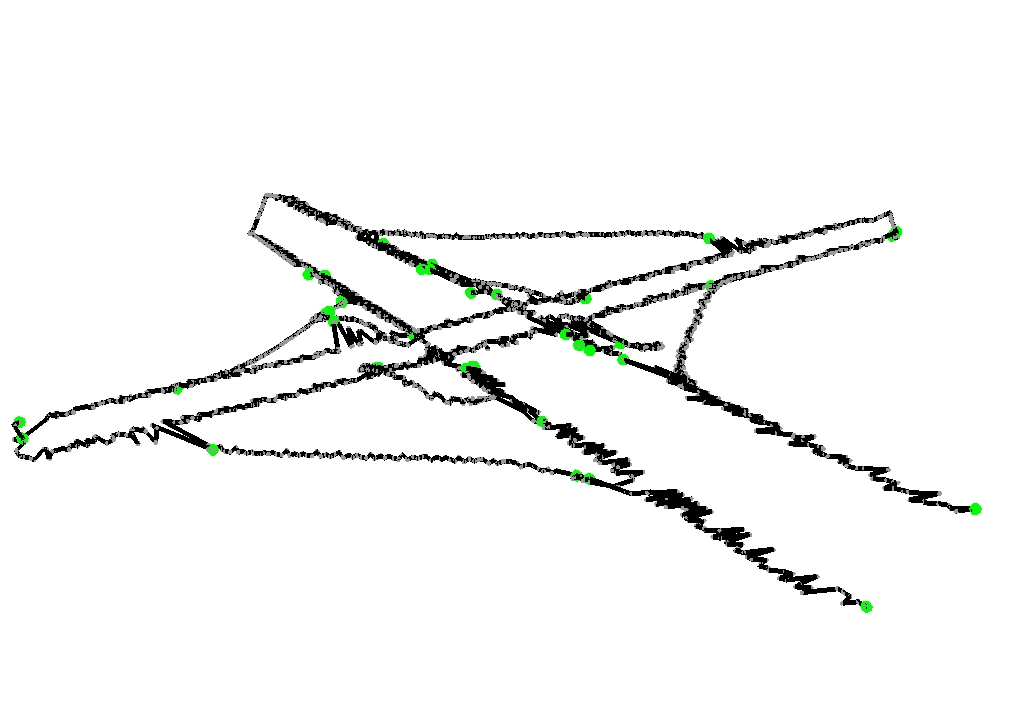}\\
(c) & (d)
\end{tabular}
\caption{(a) GPS traces passing through a highway crossing in Moscow . (b) The distance function.
(c) and (d)The reconstructed $\alpha$-Reeb graph viewed from two perspectives. }
\label{fig:cross-moscow}
\end{center}
\end{figure}

\section{Discussion}  \label{sec:discussion}
We have proposed a method to approximate path metric
spaces using metric graphs with bounded Gromov-Hausdorff distortion, 
and illustrated the performances of our method on a few synthetic 
and real data sets. Here we point out a few possible directions for
future work. First, notice that the $\alpha$-Reeb graph is a quotient space where
the quotient map is 1-Lipschitz and thus the metric only gets contracted. 
In addition, the distance from a point to the chosen root is exactly preserved. 
Therefore, one always reduces the metric distortion by taking the maximum of 
the graph metrics of different root points. It is interesting to study 
the strategy of sampling root points to obtain the smallest metric 
distortion with the fixed number of root points. Second, our method 
in the current form does not recover the topology of the underlying 
metric space. The tools recently developed in persistence homology
seem useful for recovering topology: we provide a preliminary persistence-based result in the Appendix showing that the first Betti number of the underlying metric graph can be inferred from the data. On another hand, Reeb graphs have recently been used for topological inference purpose in \cite{dw-rgap-11}. It would be interesting to combine  our method to these approaches to also obtained topologically correct reconstruction algorithms.
Finally, our method is sensitive to the noise. One can of course
preprocess the data and remove the noise and then apply our algorithm. 
Nevertheless, it is interesting to see if the algorithm can be 
improved to handle noise.


\section*{Acknowledgments}
The authors acknowledge Daniel M\"ullner and G. Carlsson for fruitful discussions and for providing code for the Mapper algorithm. 
They acknowledge the European project CG-Learning EC contract No.~255827; the ANR project GIGA (ANR-09-BLAN-0331-01);
The National Basic Research Program of China (973 Program 2012CB825501); Tsinghua National Laboratory for Information Science 
and Technology（TNList）Cross-discipline Foundation.

\bibliographystyle{plain}
{\small
\bibliography{refPaperGraph}
}

\section*{Appendix}  \label{sec:appendix}
\subsection*{Getting the first Betti number of a graph from an approximation} \label{sec:betti-number}
Although our metric graph reconstruction algorithm does not provide topological guarantees, we show below that, using persistent topology arguments, that the first Betti number of a graph can be inferred from an approximation. 

Recall that given a compact metric space $(X,d_X)$ and a real parameter $\alpha \geq 0$, the Vietoris-Rips complex $\rips(X,\alpha)$ is the simplicial complex with vertex set $X$ and whose simplices are the finite subsets of $X$ with diameter at most $\alpha$:
$$\sigma = [x_0, x_1, \cdots, x_k] \in \rips(X,\alpha) \Leftrightarrow d_X(x_i,x_j) \leq \alpha \ \ {\rm for all } \ i,j.$$

\begin{lemma} \label{lemma:betti1-graph}
Let $G$ be a connected metric graph and let $l(G)$ be the length of the shortest loop in $G$ that is not homologous to $0$. 
For any metric space $D$ such that $d_{GH}(G,D) < \frac{1}{16} l(G)$ and any $d_{GH}(G,D) < \alpha < \frac{3}{16} l(G)$, the first Betti number of $G$ is given by
$$b_1(G) = \mbox{\rm rank}\left ( H_1(\rips(D,\alpha)) \to H_1(\rips(D,3\alpha) \right)$$
where the homomorphism between the homology groups is the one induced by the inclusion maps between the Rips complexes. 
\end{lemma}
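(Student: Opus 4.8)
The plan is to use the standard persistence-interleaving / sandwiching argument for Rips complexes, combined with the fact that a connected metric graph is homotopy equivalent to a wedge of circles whose ``smallest essential cycle'' has length $l(G)$. The key external input is that for a connected metric graph $G$ the Rips complex $\rips(G,\alpha)$ recovers $H_1(G)$ for all scales $0 < \alpha < l(G)/?$ — more precisely, the inclusion $\rips(G,\alpha) \hookrightarrow \rips(G,\beta)$ induces an isomorphism on $H_1$ as long as both $\alpha$ and $\beta$ are below the relevant threshold, and $H_1(\rips(G,\alpha)) \cong H_1(G)$ in that range. This is a known result on Rips complexes of graphs (e.g. the work on Rips complexes of geodesic spaces, and specifically of metric graphs where $1$-dimensional homology stabilizes below the shortest essential loop length); I would cite it and use it as a black box.

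First I would set up the Gromov--Hausdorff interleaving. If $d_{GH}(G,D) < \alpha' < \alpha$, there is a correspondence $C \subset G \times D$ distorting distances by less than $2\alpha'$. A correspondence with distortion $\eta$ yields, for every $s$, simplicial maps (well-defined up to contiguity) $\rips(D,s) \to \rips(G,s+\eta)$ and $\rips(G,s) \to \rips(D,s+\eta)$ whose composites (either order) are contiguous to the inclusions $\rips(D,s)\hookrightarrow\rips(D,s+2\eta)$ and $\rips(G,s)\hookrightarrow\rips(G,s+2\eta)$. Taking $\eta$ slightly above $2 d_{GH}(G,D)$, and using $d_{GH}(G,D) < \alpha$, I get a commuting ladder of homology maps interleaving the persistence modules of $\rips(D,\cdot)$ and $\rips(G,\cdot)$ at an additive scale controlled by $d_{GH}(G,D)$.

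Next I would chase the ladder. The map $H_1(\rips(D,\alpha)) \to H_1(\rips(D,3\alpha))$ factors through $H_1(\rips(G,\alpha+\eta)) $ and then $H_1(\rips(D,3\alpha))$ factors through $H_1(\rips(G, 3\alpha - \eta))$; conversely $H_1(\rips(G,\alpha'')) \to H_1(\rips(G,\beta''))$ for suitable $\alpha'',\beta''$ factors through $H_1(\rips(D,\alpha)) \to H_1(\rips(D,3\alpha))$. So the rank of the $D$-map is squeezed between the ranks of two $G$-maps $H_1(\rips(G,a)) \to H_1(\rips(G,b))$ with $0 < a \le b < l(G)$ (this is where the numeric constants $\tfrac{1}{16}$ and $\tfrac{3}{16}$ are chosen: one needs $a$ above $d_{GH}$ and $b = 3\alpha + $ (a bit) $< l(G)$, i.e. roughly $3 \cdot \tfrac{3}{16} < \tfrac{1}{2} < 1$ with room for the $2 d_{GH}$ slack on each side). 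By the black-box stability result each such $G$-map is an isomorphism onto $H_1(G)$, hence has rank $b_1(G)$, and the squeeze forces the $D$-map to have rank $b_1(G)$ as well.

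The main obstacle is the black-box step: pinning down the exact range of scales on which $H_1(\rips(G,\cdot))$ is constant and equal to $H_1(G)$ for a metric graph, and verifying that the chosen constants $\tfrac{1}{16} l(G)$ and $\tfrac{3}{16} l(G)$ leave enough slack so that after two rounds of $\pm 2 d_{GH}(G,D)$ interleaving corrections all the intermediate $G$-scales still land strictly inside that range (both above $0$ — so no spurious classes from being too fine — and below $l(G)$ — so the essential loop has not yet been filled). Everything else is a routine contiguity/functoriality diagram chase, so in the write-up I would state the metric-graph Rips stability fact precisely as a cited lemma and then present only the interleaving ladder and the rank squeeze.
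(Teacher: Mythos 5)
Your plan is structurally the same as the paper's proof: build an interleaving between the Rips persistence modules of $D$ and $G$ from a Gromov--Hausdorff correspondence (the paper simply cites the Rips stability result of Chazal--de Silva--Oudot instead of redoing the contiguity argument, with your $\eta\approx 2\,d_{GH}(G,D)$ being the same slack), invoke a Hausmann-type theorem identifying the range of scales on which $H_1(\rips(G,\cdot))$ is constant and equal to $H_1(G)$, and then sandwich the rank of $H_1(\rips(D,\alpha))\to H_1(\rips(D,3\alpha))$ between two $G$-maps that are isomorphisms. Up to that point there is nothing to object to.

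The genuine gap is precisely the step you defer as a black box together with the ``routine'' slack check: that is where all the content of the stated constants lives, and your provisional arithmetic assumes the wrong threshold. The result the paper relies on (Hausmann, as cited in its proof) gives $\rips(G,\beta)\simeq G$, and isomorphisms on $H_1$ under inclusions, only for scales $\beta<\frac{1}{4}l(G)$ (even for a circle the sharp threshold is one third of the circumference, not the full loop length). Your check ``$3\cdot\frac{3}{16}<\frac{1}{2}<1$'' implicitly treats the threshold as being of order $l(G)$; with the correct threshold, the outer $G$-scale in your squeeze is $3\alpha$ plus the interleaving slack, and for $\alpha$ near the permitted $\frac{3}{16}l(G)$ this is at least $\frac{9}{16}l(G)$, far above $\frac{1}{4}l(G)$, so the upper $G$-map is no longer known to be an isomorphism and the sandwich does not close. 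The constraint you actually need is the one the paper's own proof imposes, namely $3\alpha+\varepsilon<\frac{1}{4}l(G)$ with $\varepsilon$ the interleaving constant coming from the correspondence; note this is a genuinely stronger restriction on $\alpha$ than the bound $\alpha<\frac{3}{16}l(G)$ appearing in the statement, so the quantitative verification you postponed is not a formality --- it is the delicate point of the lemma, and as estimated in your sketch it fails.
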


\begin{proof}
The proof follows from a result of \cite{h-ovcct-95} that relates the homology of the Rips complexes built on top of $G$ to the homology of $G$ and a result of \cite{cdso-psvrc-12} that allows to relate the Rips filtration built on top of $G$ and $D$ at the homology level. 
Since $G$ is a geodesic path, it follows from Theorem 3.5 and Remark 2), p.179 in \cite{h-ovcct-95} that for any $\alpha < \frac{1}{4} l(G)$, $\rips(G,\alpha)$ and $G$ are homotopy equivalent. Moreover, from Proposition 3.3 in \cite{h-ovcct-95}, for any $\alpha \leq \alpha' < \frac{1}{4} l(G)$, the homomorphism $H_1(\rips(G,\alpha)) \to H_1(\rips(G,\alpha'))$ induced by the inclusion map is an isomorphism. 

Now let $C \subset D \times G$ be an $\varepsilon$-correspondence between $D$ and $G$ where $\varepsilon < \frac{1}{16} l(G)$. According to \cite{cdso-psvrc-12}, the persistence modules $(H_1(\rips(D,\alpha))_{\alpha \in \R_+}$ and $(H_1(\rips(G,\alpha))_{\alpha \in \R_+}$ are $\varepsilon$-interleaved. Now let $\alpha$ be as in the statement of the lemma and let $\beta>0$ be such that $\beta + \varepsilon < \alpha$. The $\e$-interleaving induces the following sequence of homomorphisms
$$H_1(\rips(G,\beta)) \to H_1(\rips(D,\alpha)) \to H_1(\rips(G,\alpha+\e)) \to H_1(\rips(D,3\alpha)) \to H_1(\rips(G,3\alpha+\e))$$
where the composition of two consecutive homomorphisms is the homomorphism induced by the inclusion map between the corresponding Rips complexes. As a consequence since $3\alpha+\e<\frac{1}{4} l(G)$ the homomorphisms $H_1(\rips(G,\beta)) \to H_1(\rips(G,\alpha+\e))$ and $H_1(\rips(G,\alpha+\e)) \to H_1(\rips(G,3\alpha+\e))$ are isomorphisms of rank $b_1(G)$. It follows that the rank of $H_1(\rips(D,\alpha)) \to H_1(\rips(D,3\alpha))$ is equal to $b_1(G)$.
\end{proof}

\end{document}